\newcommand{\argmax}{\operatornamewithlimits{argmax}}
\newcommand{\de}{^{\delta, \epsilon}}
\newcommand{\ep}{^{\epsilon}}
\def\z{{\mathbf z}}
\def\cP{{\mathcal P}}
\def\bR{{\mathbb R}}
\def\bE{{\mathbb E}}
\def\f0{{\mathbf 0}}
\def\md{{\mathrm{d}}}
\def\R{{\mathbb R}}
\newcommand{\cblue}{\textcolor{blue}}
\newcommand{\zz}{\mathbf{z}}
\def \latentstep {\gamma}
\def \thetastep {\delta}
\newtheorem{thm}{Theorem}
\newtheorem{assumption}{Assumption}
\newtheorem{lem}{Lemma}
\newtheorem{prop}{Proposition}
\theoremstyle{definition}
\newtheorem{rem}{Remark}
\newtheorem{note}{Note}
\newtheorem{example}{Example}
\title{A Multiscale Perspective on Maximum Marginal Likelihood Estimation}
\author[ ]{O. Deniz Akyildiz$ ^{\cblue{\star}}$, Michela Ottobre$ ^{\cblue{\dagger}}$, Iain Souttar$ ^{\cblue{\ddagger}}$}
\affil[$\cblue{\star}$]{Imperial College London}
\affil[$\cblue{\dagger}$]{Heriot-Watt University}
\affil[$\cblue{\ddagger}$]{University of Warwick}
\affil[ ]{{\textcolor{red}{\footnotesize \texttt{deniz.akyildiz@imperial.ac.uk, m.ottobre@hw.ac.uk, iain.souttar@warwick.ac.uk}}}}
\begin{document}
\maketitle
\begin{abstract}
In this paper, we provide a multiscale perspective on the problem of maximum marginal likelihood estimation. We consider and analyse a diffusion-based maximum marginal likelihood estimation scheme using ideas from multiscale dynamics. Our perspective is based on \textit{stochastic averaging}; we make an explicit connection between ideas in applied probability and parameter inference in computational statistics. In particular, we consider a general class of coupled Langevin diffusions for joint inference of latent variables and parameters in statistical models, where the latent variables are sampled from a fast Langevin process (which acts as a sampler), and the parameters are updated using a slow Langevin process (which acts as an optimiser). We show that the resulting system of stochastic differential equations (SDEs) can be viewed as a two-time scale system. To demonstrate the utility  of such a perspective, we show that the  \textit{averaged} parameter dynamics obtained in the limit of scale separation can be used to estimate the optimal parameter,  within the strongly convex setting. We do this by using recent uniform-in-time non-asymptotic  averaging bounds. Finally, we conclude by showing that the  slow-fast algorithm we consider here, termed \textit{Slow-Fast Langevin Algorithm}, performs on par with state-of-the-art methods on a variety of examples. We believe that the stochastic averaging approach we provide in this paper enables us to look at these algorithms from a fresh angle, as well as unlocking the path to develop and analyse new methods using well-established averaging principles.

\end{abstract}

\section{Introduction}\label{sec:intro}
Estimation of parameters in a statistical model with latent variables is one of the most fundamental computational tasks in statistics. This problem arises in many latent variable models described for natural data, such as images, audio, or graphs \cite{bishop2006pattern,smaragdis2006probabilistic,hoff2002latent} or in empirical Bayesian estimation \cite{carlin2000empirical} where the prior can be parameterised. Most notably, the need  arises in complex statistical models and probabilistic unsupervised machine learning \cite{bishop2006pattern}. The latent variable models enable the description of complex statistical relationships via defined hidden variables as well as also enable the ``discovery'' of latent representations of data \cite{whiteley2022discovering}. Therefore, it is crucial to develop principled and efficient methods for parameter estimation in latent variable models with precise theoretical guarantees.

Statistically speaking, the estimation of parameters under the presence of latent variables can be generally named as ``maximum marginal likelihood estimation'' (MMLE) problem \cite{dempster1977maximum}, as the likelihood of data in this case is defined as an integral over the unobserved variables. In this setting, we define our parameterised statistical model as $p_\theta(x, y)$ which involves three key quantities: A parameter $\theta \in \bR^{d_\theta}$, latent variables $x \in \bR^{d_x}$, and \textit{observed} data $y \in \bR^{d_y}$. We view the statistical model $p_{\theta}(x,y)$ as a function of $x$ and $\theta$ with fixed (observed) $y$, i.e., $p_\theta(x, y): \bR^{d_\theta} \times \bR^{d_x} \to \bR$. Given observed (and fixed) data $y$, the marginal likelihood is defined as an integral $p_\theta(y) = \int_{\R^{d_x}} p_\theta(x,y) \mathrm{d}x$,
where $p_\theta(y): \mathbb{R}^{d_\theta} \to \bR$ is seen as a function of $\theta$, as $y$ is fixed. Then the problem of MMLE can be simply defined as the maximisation  with respect to $\theta$ of this function (referred to as the objective function) or,  more generally, of $\log p_\theta(y)$. The purpose is to find \textit{marginal} maximum likelihood estimates $\theta^{\star}$ that best explain the observed data $y$. Unlike the case in maximum likelihood estimation (MLE), the structure of this problem excludes the use of classical optimisation techniques as the objective function is an integral, which can in general not be calculated exactly, so that an explicit expression for $p_{\theta}(y)$ is not available. 

One of the most common approaches for solving this kind of problem is the celebrated Expectation Maximisation (EM) algorithm \cite{dempster1977maximum} (which we recall in more detail in Section \ref{sec: expectation- maximization algorithm}). The EM algorithm is an iterative method that alternates between two steps: The E-step, where the expectation of the log-likelihood is computed with respect to the latent variables, and the M-step, where the parameters are updated using the expectation computed in the E-step. The EM algorithm is a general method that can be applied to a wide range of latent variable models, and it is known to converge to a local maximum of the marginal likelihood \cite{wu1983convergence}. Furthermore, within the setting of general statistical models, the EM algorithm contains two generally intractable steps, namely the computation of the expectation in the E-step and the maximisation in the M-step. Traditionally, the implementations focused on models where these steps are exact, but in the last few decades, a plethora of methods have been proposed to replace these steps with approximate procedures. For example, instead of computing an exact expectation in the E-step, one can approximate this step with perfect Monte Carlo \cite{wei1990monte,sherman1999conditions} if this simulation task is feasible. In the same vein, the M-step can be performed approximately, using numerical techniques such as gradient descent \cite{meng1993maximum,liu1994ecme,lange1995gradient}. Based on these ideas, in the last two decades, significant research activity was focused on developing approximate EM algorithms, see, e.g., Monte Carlo EM (MCEM) \cite{wei1990monte}, stochastic EM (SEM) using stochastic approximation \cite{celeux1985sem} and several variants, e.g., \cite{celeux1992stochastic,chan1995monte,booth1999maximizing,cappe1999simulation,diebolt1995stochastic}. These methods often replace the E-step with a perfect Monte Carlo estimate and approximate the M-step with stochastic approximation techniques. Naturally, in complex statistical models, E-step would be defined with respect to the posterior density of latent variables, which would typically be impractical. This impracticality motivated the use of Markov chain Monte Carlo (MCMC) methods to approximate the E-step, see, e.g., \cite{delyon1999convergence,fort2011convergence,fort2003convergence,caffo2005ascent,atchade2017perturbed}.

In recent years, motivated by diffusion (or SDE) based MCMC samplers, such as the unadjusted Langevin algorithm (ULA) \cite{roberts1996exponential,durmus2019high,dalalyan2017further,dalalyan2017theoretical,dalalyan2019user}, a number of methods have been proposed for MMLE problem that are approximating the E-step of the EM with an unadjusted Langevin chain. In this context, \cite{de2021efficient} studied an algorithm, abbreviated as SOUL, which consists of an E-step based on an approximation provided by a ULA chain and an M-step based on gradient descent. The work \cite{de2021efficient} showed the convergence of the algorithm (in discrete-time) under some strict conditions, and furthermore provided empirical evidence of the performance of this method. {Inspired by SDE-based approaches such as SOUL, there has been also algorithms developed based on \textit{interacting} particle systems - which instead of a single ULA chain for the E-step, use a system of interacting $N$ particles run in parallel with parameter updates, see, e.g., \cite{kuntz2023particle, akyildiz2023interacting}. These methods, termed particle gradient descent (PGD) \cite{kuntz2023particle} and interacting particle Langevin algorithm (IPLA) \cite{akyildiz2023interacting}, are related to the SOUL (and other EM methods above) as they maximise the marginal likelihood. They are also related to our approach as they build SDEs defined in the joint space of parameters and the latent space.}

We should emphasize that, {from our perspective}, the SOUL algorithm, as much as the work of the present paper, can be seen as a reincarnation, in a statistical context, of algorithms such as the {\em Heterogeneous Multiscale Method} (HMM) \cite[Chapter 9]{weinan2011principles}, {\em equation free computation} (EFC) \cite{kevrekidis2009equation} or {\em Temperature Accelerated Molecular Dynamics} (TAMD) \cite{stoltz2018longtime}, which have been well known for a long time in molecular dynamics. In this paper we will mostly make comparisons with SOUL (and other recent algorithms) rather than e.g. HMM, but this is only because the context and language used in the literature on HMM, EFC or TAMD are typically more adapted to molecular dynamics, while SOUL is phrased in statistical terms; nonetheless  all these algorithms, HMM, EFC, TAMD, SOUL  (to name just a few and with no claim to completeness of references) significantly overlap,  so we could have equally chosen HMM for comparison.

One of the main purposes of this paper is to show that a stochastic averaging perspective can help reach a unified view on the algorithms we mentioned so far (and probably on others as well). Beyond being conceptually helpful, this observation can help create frameworks for analysis and stimulate ideas for new algorithms tackling the MMLE problem.\\

\noindent\textbf{Contributions.} In this paper, we consider the MMLE problem and observe that it is possible to design a multiscale SDE system which, in appropriate  regimes, converges to the sought after optimiser, $\theta^{\star}$, of the marginal likelihood $p_{\theta}(y)$. More precisely, the multiscale system we consider is a so-called slow-fast system of SDEs, where the slow variables serve to update the parameter process while the fast variables correspond to the latent variables of our statistical model.   We leverage recent uniform-in-time averaging results (see, e.g., \cite{crisan2022poisson}) to provide a novel analysis of the resulting two-time scale system. In particular, we show that, in the statistical setting, the parameter process of the two-time scale system can be shown to stay close to an ideal noisy gradient ascent scheme which includes the gradient of the log marginal likelihood. That is, the slow parameter process will converge, in the small diffusion limit, to the desired optimiser (assuming such an optimiser is unique). We then argue that this two-time scale system can be viewed as a continuous-time limit of various expectation-maximization type algorithms, such as SOUL.  This enables us to provide a novel perspective on the convergence of these algorithms. More explicitly, in this paper we consider Euler discretizations of our two-scale continuous time dynamics - and we study the discretization error as well - and suggest that SOUL and, more in general, EM, can be seen as different discretizations of this underlying  slow-fast system of SDEs.

As a byproduct of our approach, we unify a number of known algorithms from computational statistics and multiscale methods such as SOUL \cite{de2021efficient}, IPLA \cite{akyildiz2023interacting}, PGD \cite{kuntz2023particle}, and HMM \cite{weinan2011principles} EFC \cite{kevrekidis2009equation}, and (TAMD) \cite{stoltz2018longtime} within a common framework, observing that they could all be thought of as different discretizations of an appropriate multiscale dynamics. This observation unlocks a number of research directions in putting forward a coherent theoretical framework for these methods.

The paper is  organised as follows. In Section~\ref{sec:background}, we clarify the problem at hand and provide background material about the EM algorithm and about Stochastic Averaging. In Section \ref{sec:stoch_av_approach to MMLE} we explain how two-scale systems of SDEs can be used to address the MMLE problem. In Section \ref{sec:main results} we state our main results, namely we present {\em non-asymptotic} error bounds to prove the convergence of our continuous-time slow-fast system to the desired optimiser and  to estimate the discretization error when our continuous time dynamics is discretised via Euler-Maruyama scheme. The analysis is carried out in the case of unique optimiser (but we make some remarks on the more general case of multiple optimisers).  Section \ref{sec:relevant_work} contains more precise comparison to related results in the literature while in Section \ref{sec:comparison_SOUL} we spell out the relationship between our (continuous time) multiscale system and the (discrete time) SOUL algorithm.  Finally, Section \ref{sec:sec 4} gathers precise assumptions under which our main results hold and  proofs of such results; Section \ref{sec:numex} provides numerical examples on the performance of our algorithm (i.e. on the Euler-Maruyama discretization of the multiscale SDE we discuss), in comparison to SOUL or to particle-based algorithms such as PGD. All proofs can be found in \cite{suppMat}.

\section{Background}\label{sec:background}
\subsection{Problem definition}
Let $p_\theta(x,y)$ be a generic statistical model. We assume that the data $y$ is \textit{fixed} from now on and will treat $p_\theta(x, y)$ as a function of $(\theta, x)$ only. Given a statistical model $p_\theta(x, y)$, the usual tasks of statistical inference are (i) to estimate the parameters $\theta$ that best explain the data, (ii) then subsequently sample from the conditional (posterior) distribution $p_{\theta^\star}(x|y)$ to conduct inference on the latent variables. The first task is known as the maximum marginal likelihood estimation (MMLE) problem, and the second task is known as the sampling problem.

The MMLE problem is defined as the following maximisation problem
\begin{align}\label{eq:mmle}
    \theta^\star \in \arg\max_{\theta \in \bR^{d_\theta}} \log p_\theta(y) \, ,
\end{align}
where $p_\theta(y) := \int p_\theta(x,y) \mathrm{d}x$ is the marginal likelihood and is treated as a function of $\theta$, i.e., $p_\theta(y): \bR^{d_\theta} \to \bR$. We next define the negative joint log-likelihood
\begin{align}\label{potential}
    U(\theta, x) = -\log p_\theta(x, y). \footnotemark 
\end{align}
\footnotetext{Of course we should retain the dependence on the data $y$ also in the notation for the potential $U$. We refrain from doing so to follow tradition in the field. }
Given this definition, we can rewrite $p_\theta(y)$ as
\begin{align}
    p_\theta(y) := \int_{\mathbb{R}^{d_x}} p_\theta(x,y) \mathrm{d}x = \int_{\mathbb{R}^{d_x}} e^{-U(\theta,x)} \mathrm{d}x.
\end{align}
Maximisation of $p_\theta(y)$ is a central statistical problem. One of the most used methods to this end is the celebrated expectation maximisation algorithm which is introduced next \cite{dempster1977maximum}.

\subsection{The Expectation-Maximisation algorithm}\label{sec: expectation- maximization algorithm}
{While our approach can be motivated solely using the MMLE problem, we introduce the EM algorithm below to make the connection between the framework introduced in this paper and practical implementations of the EM algorithm concrete.}

The EM procedure \cite{dempster1977maximum} can be succinctly described as follows. Given an estimate of the parameter at time $k-1$, denoted as $\theta_{k-1}$, the idea is to produce the next iterate $\theta_k$ to maximise $p_\theta(y)$ in a two-stage iterative algorithm. These two stages are called Expectation (E) and Maximisation (M) steps, respectively. First, the E-step is performed to compute an expectation (E-step)
\begin{align}\label{eq:theEM-E-Step}
Q(\theta, \theta_{k-1}) = \mathbb{E}_{p_{\theta_{k-1}}(x|y)}[\log p_\theta(x,y)] = - \int U(\theta, x) p_{\theta_{k-1}}(x|y) \mathrm{d}x,
\end{align}
which is followed by the maximisation step (M-step)
\begin{align}\label{eq:theEM-M-Step}
\theta_k \in \arg \max_{\theta \in \R^{d_{\theta}}} Q(\theta, \theta_{k-1}). \quad \quad \quad \quad
\end{align}
This strategy works, as one can easily show that $\log p_{\theta_k}(y) \geq \log p_{\theta_{k-1}}(y)$ (see Appendix~\ref{app:EM}) and the method hence climbs to a local maximum. However, these two steps are generally intractable. In particular, the E-step in \eqref{eq:theEM-E-Step} requires integration w.r.t. the posterior distribution $p_{\theta}(x|y)$ which is unavailable in general. This step is therefore generally approximated with sampling algorithms (and in the most general case, Markov chain Monte Carlo methods). In turn, the resulting $Q$ function is an approximation, which is maximised in \eqref{eq:theEM-M-Step}. Similarly, however, this maximisation is also intractable in most general case and typically approximated using numerical optimisation methods, such as gradient descent \cite{meng1993maximum, liu1994ecme}. We will show that the combination of the two steps can be interpreted as a form of stochastic averaging, which we come to review in the next section.

\subsection{Background on stochastic averaging}\label{sec:sa_background}
Let us briefly recap on some basics of stochastic averaging, in  the SDE setting which is relevant to this paper. For a more general introduction we refer to the excellent textbooks \cite{weinan2011principles, pavliotis2006introduction}. To fix ideas, consider the following  system of SDEs
\begin{align}
\mathrm{d} \theta_t &= b(\theta_t, X_t) \mathrm{d}t + \sqrt{2\sigma} \mathrm{d}W_t^0 \label{eqn:slowintro}\\
\mathrm{d} X_t &= \frac{1}{\epsilon}  a(\theta_t, X_t) \mathrm{d}t + \sqrt{\frac{2}{\epsilon}} \mathrm{d}W_t^1 \label{eqn:fastintro} 
\end{align}
where $0<\epsilon \ll 1$, the  $W_t^i$'s are independent, $d_{\theta}$ and $d_{x}$  - dimensional standard Brownian motions, respectively, while $b: \mathbb R^{d_{\theta}} \times \mathbb R^{d_{x}} \rightarrow \mathbb R^{d_{\theta}}$, $a: \mathbb R^{d_{\theta}} \times \mathbb R^{d_{x}} \rightarrow \mathbb R^{d_{x}}$ are  coefficients satisfying appropriate assumptions  (which we make precise in subsequent sections) and $\sigma>0$.

 Due to the time-scaling properties of Brownian motion,  as $\epsilon$ tends to zero the process $X_t$ in \eqref{eqn:fastintro} moves faster and faster (for $t$ fixed), so that the parameter $\epsilon$ defines a scale separation between the {\em slow process} (SP) $\theta_t$ and the {\em fast process} $X_t$. For this reason systems of the form \eqref{eqn:slowintro}-\eqref{eqn:fastintro} are often referred to as {\em slow-fast } systems of SDEs. In this context one is usually  interested in understanding what is the behaviour of the process $\theta_t$, in the (singular) limit $\epsilon \rightarrow 0$. The study of this limit goes under the name of {\em (stochastic) averaging}.    Since $X_t$ evolves much faster than $\theta_t$, intuitively, the former process will have ‘reached equilibrium’ while the latter has remained substantially unchanged. Hence, one expects that, in the limit $\epsilon \rightarrow 0$, the process $\theta_t$ will only be influenced by the equilibrium state of the process $X_t$. Since these two processes are coupled, one way to understand how to obtain the averaging limit is as follows:   we fix the value of the SP, say to $\theta_t=\theta$, and consider an intermediate process, the so-called 'Frozen Process' (FrP), namely the process
\begin{equation}\label{eqn:frozenintro}
\md X_t^{\theta} = a( \theta,X^{\theta}_t)  \md t + \sqrt{2} \md W_t^1
\end{equation}
which can be seen as a one-parameter family of processes, parametrized by $\theta$. It  is important to notice at this point that \eqref{eqn:frozenintro} is obtained from \eqref{eqn:fastintro} by setting $\epsilon=1$ and freezing the value of $\theta_t$. To understand what comes next note also that, once in \eqref{eqn:fastintro} we freeze the value of $\theta_t$, again because of the time-scaling of Brownian motion, fixing $t$ and letting $\epsilon$ to zero is the same as fixing $\epsilon$ and letting $t$ to infinity. 
With this in mind,  assuming  that the FrP  is ergodic, i.e. that $X^{\theta}_t$  admits, for every $\theta$ fixed, a unique invariant measure, $\mu^{\theta}$,  (and moreover that the process converges to such an invariant measure irrespective of its initial state as $t \rightarrow \infty$) one can show that, as $\epsilon$ tends to zero, the SP $\theta_t$, solution of \eqref{eqn:slowintro},   converges, at least weakly,  and at least over  finite time-horizons $[0,T]$,  to the so called {\em averaged dynamics}, $\bar{\theta}_t$, namely to the solution of the following SDE

\begin{align}\label{eq:general_averaged_sde}
\md\bar \theta_t = \bar b (\bar \theta_t) \md t + \sqrt{2\sigma} \md W_t\, , 
\end{align}
where the drift $\bar b$  is obtained from the one of the SP in \eqref{eqn:slowintro} by averaging the fast variable with respect to the invariant measure, i.e.
\begin{equation}\label{barb}
\bar b (\theta) = \int_{\mathbb R^{d_x}  }b(\theta,x) \mu^{\theta}(\md x) \,.
\end{equation}

To summarize, via the averaging procedure we have just described, one can approximate the slow-fast process \eqref{eqn:slowintro}-\eqref{eqn:fastintro} - more precisely, the process \eqref{eqn:slowintro} -  via the averaged dynamics \eqref{eq:general_averaged_sde}. The dynamics \eqref{eq:general_averaged_sde} can be thought of as a reduced description of \eqref{eqn:slowintro}-\eqref{eqn:fastintro} (it is lower dimensional and it does not retain the detail of the fast process, as it only sees the behaviour in equilibrium of $X_t$), and this point of view is often adopted in modelling contexts, see \cite{weinan2011principles, pavliotis2006introduction}. In computational statistics it is however more advantageous to adopt the reverse point of view and use the dynamics \eqref{eqn:slowintro}-\eqref{eqn:fastintro} as a way of simulating \eqref{eq:general_averaged_sde}. This is the perspective we adopt in this paper.  In particular in this paper we will want to simulate SDEs of the form \eqref{eq:general_averaged_sde}, motivated by the fact that, when the coefficients  $a$ and $b$ are chosen appropriately, the long-time behaviour of \eqref{eq:general_averaged_sde}
is described by our marginal of interest $p_{\theta}(y)$. This is explained in the next section but we anticipate it here to motivate the following discussion. 

The convergence of \eqref{eqn:slowintro} to the solution of \eqref{eq:general_averaged_sde} has been proved in various contexts, using a range of methods and at many levels of generality, see e.g.  \cite{pavliotis2008multiscale,GlynnMeyn, ethierKurtz}  or also \cite{pardoux2001, pardoux2003,LiuRockner, rockner2020diffusion}, which treat the setup in which the processes' state space is non-compact. See also Section \ref{sec:relevant_work} for further literature review. 

However, before the works \cite{crisan2022poisson, barre2021fast} convergence (whether weak
or strong) of the slow-fast system to the averaged dynamics, had  only been proved to take place over finite time horizons. That is,
broadly speaking, one typically establishes results of the type
\begin{equation}\label{uitdefinition}
   \left\vert\mathbb E f(\theta_t) - \mathbb E f(\bar{\theta}_t) \right \vert \leq C \epsilon^{\gamma} 
\end{equation}
for every $f$ in a suitable class of functions and some $\gamma>0$,   with $C = C(t)$ a constant dependent on time (and on the function $f$ as well). Because of the use of Gronwall’s inequality (or similar arguments), this constant is typically an increasing function of time $t$. From an algorithmic perspective this creates a clear issue as in principle one would have to choose a smaller $\epsilon$ for  longer computation time, if a certain threshold accuracy is desired (more comments on this and on other matters related to uniform in time bounds in Section \ref{sec:main results}).  More broadly,  while one can typically only prove that the
averaged dynamics is a good approximation of the original slow-fast system for finite-time
windows, with estimates that deteriorate in time, the slow-fast system  is often used
in practice as an approximation of the long-time behaviour of the averaged dynamics -- which is indeed what we do in this paper. This is clearly not justified if $C(t)$ increases with time and indeed  the need in applications
for multiscale results which hold uniformly in time,   has
been explicitly advocated, see e.g. \cite{pavliotis2022derivative}.  
The work \cite{crisan2022poisson}  filled  this theoretical gap and identified rather general
assumptions on the coefficients of the SDEs \eqref{eqn:slowintro}-\eqref{eqn:fastintro}, under which convergence, as $\epsilon$ tends to zero,   of the
slow-fast system to the limiting dynamics is actually uniform in time (UiT) - by which we mean, for the sake of clarity,  that \eqref{uitdefinition} holds for some constant $C$ independent of $t$. In this work we leverage on this result to provide rigorous foundations for multiscale-based MMLE methods. The results in \cite{crisan2022poisson} were made possible by a novel stability analysis, based on the concept of Strong Exponential Stability (SES). SES is a flexible tool, which can be used not only for approximations produced via multiscale methods but in a variety of contexts, including approximations produced via numerical methods \cite{angeli2023uniform, crisan2021uniform}, via particle methods \cite{barre2021fast} and  for infinite dimensional dynamics 
\cite{dobson2023infinite}. 

Finally, we point out that estimate \eqref{uitdefinition} is non-asymptotic. Uniform in time bounds which are asymptotic in time have also been studied. This is discussed in Section \ref{sec:relevant_work}, see \eqref{eqn:gabriel} for comparison.

\section{Stochastic averaging approach to MMLE}\label{sec:stoch_av_approach to MMLE}
In this section, we formalise our setup. Recall that our aim is to maximise $\log p_\theta(y)$, i.e., to solve the problem \eqref{eq:mmle} to find the maximisers of $\log p_\theta(y)$. To this end, it is natural to consider the following diffusion process 
\begin{align}\label{eq:averaged_theta_bar}
    \mathrm{d}\bar{\theta}_t = \nabla_\theta \log p_{\bar{\theta}_t}(y) \mathrm{d}t + \sqrt{\frac{2}{\beta}} \mathrm{d}W_t^0 \,.
\end{align}
The stationary measure of this diffusion, namely the measure 
\begin{equation}\label{eqn:rho-beta}
\bar{\rho}_\beta(\theta) \propto e^{\beta \log p_\theta(y)}
\end{equation}
will concentrate on the maximisers of $\log p_\theta(y)$, as $\beta \rightarrow \infty$ \cite{hwang1980laplace}. 
 Hence this diffusion can be seen as an \textit{optimiser} of $\log p_\theta(y)$ as  $t\rightarrow \infty$ and $\beta \to \infty$. This fact has been well-studied;  in particular, even nonasymptotic rates w.r.t. $\beta$ can be obtained, depending on the assumptions on $\log p_\theta(y)$, i.e., $\mathcal{O}(1/\sqrt{\beta})$ for the strongly convex case \cite{akyildiz2023interacting} or $\mathcal{O}(\log \beta / \beta)$ for the dissipative case \cite{pmlr-v65-raginsky17a}.

However, simulating \eqref{eq:averaged_theta_bar} is impractical  as we do not have access to $\nabla_\theta \log p_\theta(y)$. Conveniently, observing that
\begin{align*}
\nabla_\theta \log p_\theta(y) &= \frac{\nabla_{\theta} p_\theta(y)}{p_{\theta}(y)} = 
\frac{\nabla_{\theta} \int_{\mathbb R^{d_x}}p_{\theta}(x,y) \, \md x}{p_{\theta}(y)} \\
&=
\frac{\int \left[\nabla_{\theta} \log p_\theta(x,y)\right] p_{\theta}(x,y) \md x }{p_\theta(y)} = -\int \nabla_\theta U(\theta, x) p_\theta(x | y) \mathrm{d} x,
\end{align*}
one can write
\begin{align}\label{eq:theta_bar_sde}
    \mathrm{d} \bar{\theta}_t &=-\left[\int \nabla_\theta U({\bar{\theta}_t}, x) p_{\bar{\theta}_t}(x|y) \mathrm{d}x \right]\mathrm{d}t + \sqrt{\frac{2}{\beta}} \mathrm{d}W^0_t.
\end{align}
This rewriting then allows us to relate the above dynamics to \eqref{eq:general_averaged_sde}, i.e. to interpret \eqref{eq:theta_bar_sde} as the ``averaged'' process of an appropriate two-scale dynamics, namely of the following slow-fast system of SDEs, 
\begin{align}
    \mathrm{d}\theta_t &= -\nabla_\theta U(\theta_t, X_t) \mathrm{d}t + \sqrt{\frac{2}{\beta}} \mathrm{d}W^0_t,\label{eq:theta_update}\\
    \mathrm{d}X_t &= -\frac{1}{\epsilon}\nabla_x U(\theta_t, X_t) \mathrm{d}t + \sqrt{\frac{2}{\epsilon}} \mathrm{d}W^1_t\label{latent_update}
\end{align}
 where $0<\epsilon \ll 1$. 
 Indeed, note that if we fix the value of $\theta_t$ in \eqref{latent_update} then we see that 
 the `Frozen Process' associated to \eqref{latent_update} is given by (cfr \eqref{eqn:frozenintro})
\begin{equation}\label{eqn:frozen}
        \mathrm{d}X^\theta_t = -\nabla_x U(\theta, X^\theta_t) \mathrm{d}t + \sqrt{2} \mathrm{d}W^1_t
\end{equation}
and, for each $\theta \in \R^{d_{\theta}}$ fixed, the (normalised) invariant measure of the above process is given by ${e^{-U(\theta, x)}}/{\int_{\R^{d_x}} e^{-U(\theta, x) }dx}$. Using \eqref{potential}, it is clear that, from the point of view of our statistical model $p_{\theta}(x,y)$, the above measure is precisely $p_{\theta}(x\vert y)$. In other words, we can use the system \eqref{eq:theta_update}-\eqref{latent_update} as an approximation of \eqref{eq:theta_bar_sde}.  The advantage is that while \eqref{eq:theta_bar_sde} cannot be simulated directly, the system \eqref{eq:theta_update}-\eqref{latent_update} is, in principle, more amenable to simulation.  Hence, running the dynamics  \eqref{eq:theta_update}-\eqref{latent_update} for $\beta$ large and for long time $t$, will give access to the desired optimiser $\theta^{\star}$. 

The backbone of the approach in this paper is  the fact  that we can provide UiT convergence of \eqref{eq:theta_update} to \eqref{eq:theta_bar_sde} as $\epsilon  \rightarrow 0$. This is done in Theorem \ref{thm:main_thm}, stated in Section \ref{sec:main results}. Such a result guarantees  that the long time behaviour of \eqref{eq:theta_update} is "close" to the long time behaviour of \eqref{eq:theta_bar_sde}, as $\epsilon \rightarrow 0$, with estimates that  are independent of $\beta$.  These facts are crucial, given that we are interested in the long time behaviour of \eqref{eq:theta_bar_sde}, for $\beta$ large.  

In order to obtain a practical algorithm out of \eqref{eq:theta_update}--\eqref{latent_update}, we still need to discretise such a system. To this end,  in our numerical experiments  we will  use the standard Euler-Maruyama discretisation, namely 
\begin{align}
\theta_{k+1}^{\delta} &= \theta_{k}^{\delta} - \delta \nabla_{\theta} U(\theta_k^{\delta}, X_k^{\delta}) + \sqrt{\frac{2\delta}{\beta}}  \xi_k^1 \label{eq:euler-Maruyama-1} \\
X_{k+1}^{\delta} &= X_{k}^{\delta} -\frac{\delta}{\epsilon} \nabla_x U(\theta_k^{\delta}, X_k^{\delta}) + \sqrt{\frac{2\delta}{\epsilon}} \xi_k^2, \label{eq:euler-Maruyama-2}
\end{align}
where $\xi_k^1, \xi_k^2$ are i.i.d multivariate Normal random variables\footnote{Clearly both $\theta^{\delta}_k$ and $X_k^{\delta}$ depend on $\epsilon$ and $\beta$ as well, so a better notation for these quantities would be $\theta^{\delta, \epsilon, \beta}_k X_k^{\delta, \epsilon, \beta}$. We refrain from using such notation to avoid cluttering.}.  We emphasize that Euler-Maruyama is the simplest choice, and  it comes with drawbacks, discussed in the next section.  One could certainly choose other discretization schemes. Here we fix this choice purely for simplicity and to allow comparison with the other approaches to MMLE, such as SOUL and PGD. We refer to the algorithm given in \eqref{eq:euler-Maruyama-1}--\eqref{eq:euler-Maruyama-2} as \textit{Slow-Fast Langevin Algorithm} (SFLA). 
The discretization error produced when approximating the dynamics \eqref{eq:theta_update}-\eqref{latent_update} with Euler-Maruyama is studied in Proposition \ref{prop:numerical_error}, see Section \ref{sec:main results}.

\begin{algorithm}[t]
\caption{Slow-Fast Langevin Algorithm (SFLA)}
\label{alg:ais}
\begin{algorithmic}[1]
\State Choose $\epsilon, \delta, \beta, X_0, \theta_0, K$.
\For{$k = 1, \ldots K$}
\State Sample  $\xi_k^1 \sim \mathcal N(0,Id_{d_{\theta}})$ and then calculate $\theta_{k+1}^{\delta}$ as follows: 
\begin{align*}
\theta_{k+1}^{\delta} &= \theta_{k}^{\delta} - \delta \nabla_{\theta} U(\theta_k^{\delta}, X_k^{\delta}) + \sqrt{\frac{2\delta}{\beta}}  \xi_k^1,
\end{align*}
\State Sample $\xi_k^2 \sim \mathcal N(0,Id_{d_{x}})$ and then calculate $X_{k+1}\de$ as follows: 
\begin{align*}
X_{k+1}^{\delta} &= X_{k}^{\delta} -\frac{\delta}{\epsilon} \nabla_x U(\theta_k^{\delta}, X_k^{\delta}) + \sqrt{\frac{2\delta}{\epsilon}} \xi_k^2,
\end{align*}
\EndFor
\State Report $\theta_K$.
\end{algorithmic}
\end{algorithm}

\begin{note}
     We would like to note at this point the practicality of the continuous-time formulation in \eqref{eq:theta_update}--\eqref{latent_update} and their discretization \eqref{eq:euler-Maruyama-1}--\eqref{eq:euler-Maruyama-2}. This system models an implementation of the expectation-maximisation scheme, which uses Markov chain Monte Carlo (MCMC) dynamics in its E-step and uses a noisy gradient ascent scheme to perform the M-step. The introduction of $\epsilon$ further models a real world choice: the choice of different step sizes. It is common in simulations where practitioners choose different step sizes within this setting \cite{de2021efficient} which, in effect, results in a two time-scale system. Here our framework directly accommodates this. We will expand more on this point, namely on relating the two time scale system as a framework for realising the EM algorithm in Section \ref{sec:comparison_SOUL}. There we will use the specific setting of SOUL.  
\end{note}

\section{Main results}\label{sec:main results}
In this section we first state the main results of the paper and then comment on relation to literature. 

The main results of this paper  are Theorem \ref{thm:main_thm}, Proposition  \ref{prop:numerical_error}  and Theorem \ref{thm:combined result}; the latter combines  the estimates of Theorem \ref{thm:main_thm} and Proposition  \ref{prop:numerical_error}. 

Theorem \ref{thm:main_thm}  shows   convergence of the dynamics $\theta_t$, solution of \eqref{eq:theta_update}, to the optimiser $\theta^{\star}$, as $\epsilon\rightarrow 0$ and  $\beta, t\rightarrow \infty$. The bounds contained in this theorem are non-asymptotic.  The section is organised as follows: we first state Theorem \ref{thm:main_thm}  and  make comments on how it is obtained. We then state Proposition \ref{prop:numerical_error}, which addresses the study of the discretization error produced when approximating the dynamics \eqref{eq:theta_update} with Euler-Maruyama. The two results are then combined in Theorem \ref{thm:combined result},  which states the convergence of $\theta_k^{\delta}$, the Euler approximation of $\theta_t$, to $\theta^{\star}$. In this section we state results and make various comments on them. 
Proofs and full statements of assumptions  are gathered in Section \ref{sec:sec 4}.

Before stating Theorem \ref{thm:main_thm} let  us point out that the solution $\theta_t$ of \eqref{eq:theta_update} depends on $\epsilon, \beta$ and on the initial data $\theta_0, X_0$ of system \eqref{eq:theta_update}-\eqref{latent_update}, so we should denote it by $\theta_t^{\epsilon, \beta, \theta_0, X_0}$. We refrain from doing so to avoid notation overload but it is important to keep this in mind while reading Theorem \ref{thm:main_thm} below. 
In the following, and throughout, we use $\mathcal C^2_b$ to denote the space of bounded and twice continuously differentiable functions (and, where required, denote the corresponding Lipschitz constant as $C_f$).
\begin{thm}\label{thm:main_thm}
Let $\theta_t$ be as in \eqref{eq:theta_update} and $\theta^{\star}$ be  as in \eqref{eq:mmle}. Let Assumption \ref{ass:growthcoeffs}, Assumption \ref{assmp:convexity} ,  Assumption \ref{ass:strongconvexfast} and Assumption \ref{ass:avgderest} (all of which stated in Section \ref{subsec:assumptions} below) hold. Then, for any $f:\R^{d_{\theta}} \rightarrow \R$, $f \in \mathcal C^2_b$, the following holds: 
\begin{equation}\label{main estimate}
    \left \vert \mathbb E f(\theta_t) - f(\theta^{\star})\right \vert^2\leq \epsilon C (\|\nabla f\|^2_{\infty}+ \|\nabla^2f\|_{\infty}^2) +C_f^2\left(e^{-2\mu t} |{\theta}_0 - {\theta}^\star|^2 +  \frac{2 d_\theta}{\beta} (1 - e^{-2\mu t})\right),
\end{equation}
with $C = C_0(1+|X_0|^2+|\theta_0|^2)$ where $C_0$ is a constant independent of $\beta, \theta_0$ and $X_0$;  $\mu$ is as in Assumption \ref{assmp:convexity}, $|\cdot|$ is the Euclidean norm and $\|\cdot \|_{\infty}$ is the supremum norm. 
\end{thm}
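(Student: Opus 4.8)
The plan is to prove \eqref{main estimate} by a triangle-inequality decomposition that isolates the two conceptually distinct sources of error and then to control each piece with a different tool. Writing $\bar\theta_t$ for the averaged dynamics \eqref{eq:theta_bar_sde} initialised at $\bar\theta_0=\theta_0$, I would begin from
$$
\left|\E f(\theta_t)-f(\theta^\star)\right|^2 \le 2\underbrace{\left|\E f(\theta_t)-\E f(\bar\theta_t)\right|^2}_{\text{averaging error}} + 2\underbrace{\left|\E f(\bar\theta_t)-f(\theta^\star)\right|^2}_{\text{optimisation error}}.
$$
The first term quantifies how well the slow component of the two-scale system \eqref{eq:theta_update}--\eqref{latent_update} is approximated by its averaged counterpart; the second quantifies how well the averaged Langevin dynamics locates the maximiser of $\log p_\theta(y)$. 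These correspond respectively to the $\epsilon$-term and to the remaining two terms of \eqref{main estimate}.

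For the averaging error I would invoke the uniform-in-time (UiT) averaging result of \cite{crisan2022poisson}, whose decisive feature is that the resulting constant is independent of both $t$ and $\beta$, so the estimate does not deteriorate as the dynamics is run for long times or as the temperature is sharpened. Applying that result requires verifying its hypotheses in the present setting, which is where Assumptions \ref{ass:growthcoeffs}, \ref{ass:strongconvexfast} and \ref{ass:avgderest} enter. Concretely, Assumption \ref{ass:strongconvexfast} (strong convexity in $x$) ensures that the frozen process \eqref{eqn:frozen} is uniformly-in-$\theta$ geometrically ergodic with invariant measure $p_\theta(x\mid y)$, delivering the strong exponential stability that the UiT machinery demands; Assumption \ref{ass:growthcoeffs} controls the growth of the coefficients $\nabla_\theta U$, $\nabla_x U$; and Assumption \ref{ass:avgderest} supplies the regularity of the averaged drift $\nabla_\theta\log p_\theta(y)$ and of the associated corrector (Poisson equation) that the weak-error expansion requires. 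This yields a bound $\le \epsilon\, C(\|\nabla f\|_\infty^2+\|\nabla^2 f\|_\infty^2)$ once higher powers of $\epsilon$ are absorbed.

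For the optimisation error I would work directly with the averaged SDE, recalling from the computation preceding \eqref{eq:theta_bar_sde} that its drift is exactly $\nabla_\theta\log p_{\bar\theta_t}(y)$, and that $\nabla_\theta\log p_{\theta^\star}(y)=0$ at the unique maximiser. Itô's formula applied to $|\bar\theta_t-\theta^\star|^2$ gives
$$
\frac{\md}{\md t}\,\E\left|\bar\theta_t-\theta^\star\right|^2 = 2\,\E\left\langle \bar\theta_t-\theta^\star,\ \nabla_\theta\log p_{\bar\theta_t}(y)-\nabla_\theta\log p_{\theta^\star}(y)\right\rangle + \frac{2d_\theta}{\beta}.
$$
The $\mu$-strong convexity of $-\log p_\theta(y)$ (Assumption \ref{assmp:convexity}) bounds the inner product by $-\mu|\bar\theta_t-\theta^\star|^2$, and Gronwall's inequality then produces a bound of the form $e^{-2\mu t}|\theta_0-\theta^\star|^2+\frac{C d_\theta}{\beta}(1-e^{-2\mu t})$ for $\E|\bar\theta_t-\theta^\star|^2$. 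Transferring this to the test function through the Lipschitz bound $|\E f(\bar\theta_t)-f(\theta^\star)|\le\|\nabla f\|_\infty\,\E|\bar\theta_t-\theta^\star|$ (with Jensen's inequality) yields the exponential-decay and $1/\beta$-floor terms of \eqref{main estimate}.

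The step I expect to be the main obstacle is the verification that the statistical coefficients $-\nabla_\theta U$, $-\nabla_x U$ and the marginal drift $\nabla_\theta\log p_\theta(y)$ genuinely satisfy the structural hypotheses of the UiT theorem of \cite{crisan2022poisson}, and in particular that every constant generated along the way is uniform in $\beta$. This $\beta$-uniformity is delicate because the small parameter $1/\beta$ multiplies the noise of the slow equation \eqref{eq:theta_update}: one must check that the strong-exponential-stability constants for the frozen process and the Poisson-equation estimates controlling the averaging error do not secretly depend on $\beta$, so that the only $\beta$-dependence surviving in \eqref{main estimate} is the explicit $1/\beta$ fluctuation floor arising from the optimisation step. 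Combining the two estimates via $|a+b|^2\le 2a^2+2b^2$, and using $\epsilon<1$ to absorb $\epsilon^2$ into $\epsilon$, then assembles \eqref{main estimate}.
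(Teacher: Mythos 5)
Your architecture is exactly the paper's: the paper splits the error at the averaged process $\bar\theta_t$ (its inequality \eqref{triangular inequ}), bounds the averaging term by a uniform-in-time estimate obtained from \cite[Theorem 3.3]{crisan2022poisson} (Proposition \ref{prop:averaging}), and bounds the optimisation term by It\^o's formula, strong convexity and a Gronwall argument (Proposition \ref{prop:concentration bound}). You also correctly identify the delicate point in the averaging half, namely $\beta$-uniformity: the paper resolves it just as you suggest one must, by deriving the averaged semigroup derivative estimates directly from Assumption \ref{ass:avgderest} (Lemma \ref{lem:avgderest}) instead of relying on ellipticity of the slow equation, whose diffusion coefficient degenerates as $\beta\to\infty$.

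There is, however, one genuine gap in your optimisation step. You invoke ``the $\mu$-strong convexity of $-\log p_\theta(y)$ (Assumption \ref{assmp:convexity})'', but Assumption \ref{assmp:convexity} says nothing about the marginal: it asserts strong convexity of the \emph{joint} potential $U(\theta,x)$ in the variable $z=(\theta,x)$. The drift of \eqref{eq:theta_bar_sde} is the gradient of the \emph{marginal} log-likelihood $\log p_\theta(y)=\log\int e^{-U(\theta,x)}\,\mathrm{d}x$, and it is not automatic that integrating out $x$ preserves strong log-concavity. This is precisely the ingredient the paper supplies via the Prekopa--Leindler inequality (Lemma \ref{lem:PLI} and Appendix \ref{app:PLI}): joint $\mu$-strong log-concavity of $e^{-U(\theta,x)}$ implies $\mu$-strong log-concavity of $\theta\mapsto p_\theta(y)$, which is what licenses your bound $\langle \bar\theta_t-\theta^\star,\ \nabla_\theta\log p_{\bar\theta_t}(y)-\nabla_\theta\log p_{\theta^\star}(y)\rangle \le -\mu\,|\bar\theta_t-\theta^\star|^2$ and, incidentally, guarantees that $\theta^\star$ is the unique stationary point with $\nabla_\theta\log p_{\theta^\star}(y)=0$. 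Without this marginalisation argument your Gronwall step has no strong-convexity input to act on; with it, the rest of your outline goes through as in the paper.
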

While the precise assumptions under which the above theorem holds are detailed in the next section, we anticipate that the two main requirements are for $U$ to be Lipschitz and convex. These assumptions can be relaxed, but this is not the main purpose of this paper.

Let us now make a few comments on the above result. The estimate \eqref{main estimate} (almost) implies convergence of the law of $\theta_t$ to the dirac measure centered in $\theta^{\star}$. As a word of caution, this is not exactly weak convergence, as we prove \eqref{main estimate} to hold for every $f$ in  
$\mathcal C^2_b$, not for every bounded measurable function, which is what is required to prove weak convergence. This is substantially to streamline proofs. \footnote{To prove the same result for every $f$ continuous and bounded one needs to use appropriate smoothing estimates; we made more comments on this point in \cite{crisan2022poisson, barre2021fast}, see e.g. \cite[Note 2.2]{barre2021fast}, so we don't repeat them here. } Nonetheless the intuitive meaning remains analogous.  Indeed,  suppose for a moment we were allowed to take $f(\theta)= \mathbf 1_{A}(\theta)$, for some set $A \subseteq  \R^{d_{\theta}}$; then with this choice of $f$, the left hand side (LHS) of  \eqref{main estimate} is equal to $\left\vert\mathbb P (\theta_t \in A) -  \chi_{\theta^{\star}} (A)\right\vert^2$ where $\chi_{\theta^{\star}}(A)$ is equal to one if $\theta^{\star} \in A$ and to zero otherwise. 
To explain where the three contributions on the RHS come from, let us observe that the  bound \eqref{main estimate} is obtained in two steps, corresponding to estimating the two addends on the RHS of the triangular inequality below: 
\begin{align} \label{triangular inequ}
 \left \vert \mathbb E f(\theta_t) - f(\theta^{\star})\right \vert^2 &\leq    \left \vert \mathbb E f(\theta_t) - \mathbb E f(\bar{\theta}_t)\right \vert^2 +  \left \vert \mathbb E f(\bar{\theta}_t) - f(\theta^{\star})\right \vert^2 \, 
\end{align}
where $\bar{\theta_t}$ is as in \eqref{eq:theta_bar_sde}. 
That is, we first estimate the `distance' between the process $\theta_t$ and the averaged evolution ${\bar \theta_t}$. Then, since for $\beta$ and $t$ large $\bar \theta_t$ converges to $\theta^{\star}$, we  estimate the `distance' between $\bar \theta_t$  and  $\theta^{\star}$. 
The estimate on the first addend on the RHS of \eqref{triangular inequ} results in the first term on the RHS of \eqref{main estimate}, showing that  $\theta_t$ tends to  $\bar \theta_t$ as $\epsilon \rightarrow 0$, and the convergence is UiT (more comments on this below). This first estimate is stated and proved in Section \ref{subsec:averaging}, see Proposition \ref{prop:averaging}.  The second and third term on the RHS of \eqref{main estimate} come from estimating the second addend on the RHS of \eqref{triangular inequ}, reflecting the fact that $\bar{\theta}_t$ converges to $\theta^{\star}$ for large $t$ (second addend on the RHS of \eqref{main estimate}) and large $\beta$ (third addend on the RHS of \eqref{main estimate}). The latter term is studied by producing a so-called concentration bound. This is dealt with in Appendix B.6 in the Supplementary Material \cite{suppMat}, see Proposition 1 in \cite{suppMat}. Notice that all the bounds we produce here are non-asymptotic (in the relevant parameter) and so they give an explicit way of choosing $\epsilon, \beta$ and $t$ to achieve a given threshold accuracy. 

Let us now make some comments on the significance of estimate \eqref{main estimate}. 
The constant $C$ in front of the first addend on the RHS of \eqref{main estimate} is independent of time; this implies that the convergence, as $\epsilon$ tends to zero, of $\theta_t$ to $\bar{\theta}_t$ ( i.e. the averaging estimate we prove in Section \ref{subsec:averaging}) is UiT. Such an  estimate is  obtained using the approach of \cite{crisan2022poisson}. To explain why it is important to provide UiT convergence in the context of this paper, observe that had we not used the methods of \cite{crisan2022poisson}, other methods of proof,  e.g. \cite{pavliotis2008multiscale}, 
would have resulted in a bound with a constant $C$ which is an increasing function of time (typically exponentially increasing). That is, instead of  \eqref{main estimate} we would have a bound of the following type
\begin{equation}\label{main estimate - non time uniform}
    \left \vert \mathbb E f(\theta_t) - f(\theta^{\star})\right \vert^2\leq \epsilon c e^{ct} (\|\nabla f\|^2_{\infty}+ \|\nabla^2f\|_{\infty}^2) +C_f^2\left(e^{-2\mu t} |{\theta}_0 - {\theta}^\star|^2 +  \frac{2 d_\theta}{\beta} (1 - e^{-2\mu t})\right)\, , 
\end{equation}
for some constant $c>0$. This is problematic because in order for the second addend on the RHS of the above to be `small enough' we need $t$ large; however, if $t$ is large, then the above estimate would lead one to believe that, as $t$ gets larger, we would need to choose a smaller $\epsilon$, with all the complications that this implies from the point of view of numerical implementation, as $\epsilon$ regulates stiffness of the coefficients of the SDE \eqref{eq:theta_update} - \eqref{latent_update}, see Note \ref{note:numerics}. The estimate \eqref{main estimate} is sharper and avoids this issue. 

Moreover,  with similar considerations, \eqref{main estimate}  shows that also the overall convergence, as $\epsilon$ tends to zero and $\beta$ tends to infinity, of $\theta_t$ to $\theta^{\star}$,  is  UiT. This allows for  flexibility in the choice of constants in algorithmic practice (provided the discretization of the SDE for $(\theta_t, X_t)$ preserves this property), in the sense that, when simulating, $\epsilon$ and  $\beta$ can be chosen independent of time and of each other.

 In this paper we have chosen to use the diffusion \eqref{eq:theta_update} - \eqref{latent_update} as a tool to solve the MMLE problem. Equally, we could have used an identical system without the noise component in \eqref{eq:theta_update} which can be viewed as resulting from \eqref{eq:theta_update} - \eqref{latent_update} after having formally let $\beta$ to infinity. We did not do so for two reasons, one is technical, the other less so. Firstly, as we have mentioned,  the theory in \cite{crisan2022poisson}, which we use to produce the UiT averaging bound,  applies to systems of SDEs. This is not to say that it cannot be extended to cover systems of the form where $\theta$-dynamics is deterministic ODE, but the purpose of this paper is not to extend that theory further, here we want to show how it can be applied.  On the less technical side,  we point out that while, under our assumptions, the maximiser $\theta^{\star}$ is unique (owing to our convexity assumption, see Section \ref{sec:sec 4}), this is not necessarily the case. When the maximiser is not unique adding noise to $\theta$-dynamics can be beneficial to explore all the modes, see e.g. \cite{pavliotis2022derivative}. 

{The effect of $\beta$ on the dynamics \eqref{eq:theta_update}-\eqref{latent_update} is well-studied. In the strongly convex setting, which we restrict to here, there is no tradeoff: the larger $\beta$ is, the more concentrated around the unique maximiser (see Proposition 1 in the Supplementary Material \cite{suppMat}). However, in the more general setting one must balance this concentration with sufficient exploration of the potential. That is, if $\beta$ is too high (and, hence, the noise too low), the dynamics may get stuck in local maxima. This is known in the molecular dynamics literature, for a more detailed discussion of this see e.g. \cite{tamd_free_energy_beta, tamd_rare_event, md_free_energy_abrams}. We also include numerics in the setting without strong convexity in Figure \ref{fig:bnn}, indicating that a lower $\beta$ can be beneficial.}

Let us now move on to the numerical approximation of \eqref{eq:theta_update}-\eqref{latent_update}. The proposition below quantifies the discretization error. 

\begin{prop}\label{prop:numerical_error}
Let Assumption \ref{ass:growthcoeffs} and Assumption \ref{assmp:convexity} (stated in Section \ref{subsec:assumptions})  hold,  and furthermore assume the potential $U$ satisfies the following: 
\begin{align}\label{ass:euler-estimate}
    \left\vert\nabla U(z) \right\vert^2 \geq c + \tilde{c} \, |U(z)|
\end{align}
for some $c\geq 0, \tilde{c}> 0 $.  Let $\zz_t=(\theta_t, X_t)$ be the solution of the SDE  \eqref{eq:theta_update}- \eqref{latent_update} and  let $\zz_k^{\delta} = (\theta_k^{\delta}, X_k^{\delta})$ be its Euler-Maruyama discretization,  as in Algorithm \ref{alg:ais}. Then, for each $\epsilon, \beta>0$ fixed,  there exists $\xi \in (0, 1/2)$ and positive constants $\tilde{\lambda}=\tilde{\lambda}(\delta)$, $\lambda = \lambda(\epsilon)$, $G=G(\epsilon, \beta)$, $\tilde G=\tilde G(\epsilon, \beta)$ and $\tilde C= \tilde C(\epsilon, \beta, \delta)$ (with $\tilde C$ tending to zero as $\delta$ tends to zero, for each $\beta, \epsilon$ fixed)
such that the following holds 
\begin{align}\label{bound-numerical-error}
\left\vert \mathbb E g(\zz^{\delta}_k) - \mathbb E g(\z_{t})\right \vert \leq 
 \tilde C U(\z_0)  e^{-\tilde{\lambda}  \, k \delta} + G \delta^{\xi}  + \tilde{G} e^{-\lambda t} (1+ U(\z_0))
\end{align}
where $t=k\delta$,  $\z_0=(X_0,\theta_0)$ is the initial datum for both the SDE \eqref{eq:theta_update}- \eqref{latent_update}, and its discretization.  The above holds for any smooth and bounded function $g$. \footnote{In reality under our assumptions it holds for every $g$ which is at most quadratically growing.}
\end{prop}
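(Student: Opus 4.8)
The plan is to follow the uniform-in-time discretization-error paradigm underlying the strong-exponential-stability (SES) framework of \cite{crisan2022poisson, crisan2021uniform, angeli2023uniform}, specialised to the joint diffusion \eqref{eq:theta_update}-\eqref{latent_update} and its Euler--Maruyama chain \eqref{eq:euler-Maruyama-1}-\eqref{eq:euler-Maruyama-2}. Denote by $\pi$ the (unique) invariant law of the continuous system and by $\pi^\delta$ that of the chain, both for the fixed pair $(\epsilon,\beta)$. The starting point is the triangle inequality
\[
\left\vert \mathbb E g(\z_k^\delta) - \mathbb E g(\z_t)\right\vert \leq \left\vert \mathbb E g(\z_k^\delta) - \pi^\delta(g)\right\vert + \left\vert \pi^\delta(g) - \pi(g)\right\vert + \left\vert \pi(g) - \mathbb E g(\z_t)\right\vert .
\]
These three addends map exactly onto the three terms of \eqref{bound-numerical-error}: the chain relaxing to its own stationary law (yielding $\tilde C\,U(\z_0)\,e^{-\tilde\lambda k\delta}$), the stationary bias between the two invariant measures (yielding $G\delta^\xi$), and the continuous process relaxing to $\pi$ (yielding $\tilde G\,e^{-\lambda t}(1+U(\z_0))$). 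The conceptual punchline is that, letting $t=k\delta\to\infty$ with $\delta$ fixed, the two exponential terms vanish and one is left with the genuinely uniform-in-time bias $G\delta^\xi$.

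First I would establish exponential ergodicity of both dynamics. For fixed $\epsilon$ the joint drift $b(\z)=(-\nabla_\theta U,\,-\epsilon^{-1}\nabla_x U)$ is one-sided Lipschitz with a negative (possibly $\epsilon$-weighted) constant — a consequence of the convexity Assumptions \ref{assmp:convexity} and \ref{ass:strongconvexfast} together with the Lipschitz-gradient growth of Assumption \ref{ass:growthcoeffs} — so a synchronous coupling of two copies of \eqref{eq:theta_update}-\eqref{latent_update} contracts at a rate $\lambda=\lambda(\epsilon)$; this gives the third term and the convergence of the continuous semigroup to its unique $\pi$. Transferring the same one-sided Lipschitz bound to the one-step map, for $\delta$ small enough, yields the discrete contraction rate $\tilde\lambda=\tilde\lambda(\delta)$ and the first term, the prefactor $\tilde C$ carrying the $\delta$-dependence and collapsing as $\delta\to0$ because the discrete relaxation then merges into the continuous one.

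The role of the extra hypothesis \eqref{ass:euler-estimate} is to supply a Lyapunov function. Applying the generator of \eqref{eq:theta_update}-\eqref{latent_update} to $U$ gives $\mathcal L U = -|\nabla_\theta U|^2 - \epsilon^{-1}|\nabla_x U|^2 + \beta^{-1}\Delta_\theta U + \epsilon^{-1}\Delta_x U$; bounding the Laplacian terms by the Hessian control of Assumption \ref{ass:growthcoeffs} and using $|\nabla U|^2\ge c+\tilde c|U|$ produces a drift inequality $\mathcal L U \le -\tilde c|U| + \text{const}(\epsilon,\beta)$, with an analogous discrete-time estimate for the chain. This delivers the uniform-in-time moment bounds $\sup_t \mathbb E\,U(\z_t)$ and $\sup_k \mathbb E\,U(\z_k^\delta)$, each controlled by $U(\z_0)$, which are used throughout and which explain the appearance of $U(\z_0)$ on the RHS of \eqref{bound-numerical-error}.

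I expect the main obstacle to be the middle term, the uniform-in-time control of the stationary bias $\left\vert\pi^\delta(g)-\pi(g)\right\vert\le G\delta^\xi$. A naive weak-error expansion at a single finite time produces a constant growing (typically exponentially) in $t$, precisely the pathology discussed after \eqref{main estimate - non time uniform}; removing this time dependence is the crux. The route is to combine the local one-step weak error of Euler--Maruyama with the SES/contraction estimate so that the accumulated local errors sum geometrically and remain bounded uniformly in $k$, the Lyapunov moment bounds keeping the super-linearly growing drift under control along the way. The loss from the optimal weak order $1$ down to the stated exponent $\xi\in(0,1/2)$ is the price paid for a robust estimate valid for the full nonlinear drift and a general test function $g$, and I would not attempt to optimise $\xi$. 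Since every constant $G,\tilde G,\lambda,\tilde\lambda,\tilde C$ is permitted to depend on the fixed $\epsilon$ and $\beta$, no effort is made here to track stiffness as $\epsilon\to0$ — that regime is handled separately by the averaging estimate of Theorem \ref{thm:main_thm}.
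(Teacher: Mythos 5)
Your proposal is correct in substance and rests on the same pillars as the paper's argument --- a triangle inequality pivoting through invariant measures, geometric ergodicity of both dynamics, a Lyapunov function supplied by \eqref{ass:euler-estimate}, and a $\delta^{\xi}$ discretisation bias --- but the execution is genuinely different. The paper's proof is a two-term decomposition pivoting only through the invariant measure $\pi^{\epsilon}$ of the continuous system,
$\vert \mathbb E g(\zz_k^{\delta}) - \mathbb E g(\zz_t)\vert \leq \vert \mathbb E g(\zz_k^{\delta}) - \pi^{\epsilon}(g)\vert + \vert \mathbb E g(\zz_t) - \pi^{\epsilon}(g)\vert$,
and both addends are obtained by verifying the hypotheses of \cite{mattingly2002ergodicity}: ellipticity of \eqref{eq:theta_update}-\eqref{latent_update} gives the minorization condition, \eqref{ass:euler-estimate} makes $U$ a Lyapunov function satisfying their drift condition, and Theorems 2.5 and 7.3 of that paper then deliver the exponential-plus-$\delta^{\xi}$ bounds as a black box. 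Your three-term split through $\pi^{\delta}$ is essentially what that cited theorem does internally, so the decompositions are equivalent; the real difference is the ergodicity mechanism (synchronous coupling/Wasserstein contraction under strong convexity for you, versus minorization-plus-Lyapunov, which needs no convexity, for the paper) and the fact that you re-derive ingredients the paper cites. Your route is self-contained and would in principle expose the $(\epsilon,\beta)$-dependence of the constants, which the paper concedes (Note \ref{note:numerics}) it cannot track; the paper's route is shorter and rigorous by citation.

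Two caveats. First, your claim that the joint drift $(-\nabla_{\theta}U, -\epsilon^{-1}\nabla_x U)$ is one-sided Lipschitz with a negative constant fails in the plain Euclidean pairing for small $\epsilon$: joint convexity controls $A+B$ (with $A,B$ the $\theta$- and $x$-blocks of the monotonicity pairing), while you need $A+\epsilon^{-1}B$, and the amplified cross terms can destroy the sign. The fix is the weighted norm you allude to, $\vert z\vert_{\epsilon}^2 = \vert\theta\vert^2 + \epsilon\vert x\vert^2$, in which the pairing becomes $A+B$ again; since constants may depend on $\epsilon$, norm equivalence finishes the argument, but this must be made explicit. Second, your justification that $\tilde C(\delta)\to 0$ (``the discrete relaxation merges into the continuous one'') does not follow from your decomposition: at $k=1$ your first term tends to $\vert g(\zz_0) - \pi(g)\vert \neq 0$ as $\delta\to 0$, so its prefactor cannot vanish. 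That refinement has to be read off from the structure of the constants in \cite[Theorem 7.3]{mattingly2002ergodicity} rather than from a qualitative merging argument.
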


\begin{note}\label{note:numerics}
    \textup{Some comments on the statement of Proposition \ref{prop:numerical_error}. 
    \begin{itemize}
    \item  In our numerical experiments of the next section we use the Euler-Maruyama scheme in order to simulate the dynamics \eqref{eq:theta_update}- \eqref{latent_update}. We do this for simplicity and because this was sufficient for the test cases we present here. However note that, since $\epsilon$ is small, equation \eqref{latent_update} has stiff coefficients and  it is well known that using the Euler-Maruyama discretization for this type of equations is quite expensive -- the scaling of $\epsilon$ with $\delta$ is inversely proportional, i.e. the smaller $\epsilon$, the smaller the step-size $\delta$ that one needs to choose, see \cite[Chapter 9]{weinan2011principles} for a precise account on this. So our choice of numerical scheme is by far not optimal. Since this choice is not optimal we do not carry out a refined analysis of the approximation error. In particular it is important to point out that the constant  $\tilde C$ in \eqref{bound-numerical-error} depends on $\epsilon$ (and $\beta$). We do not track the exact dependence of $ \tilde C$ on such parameters in our proof. From numerical experiments and because of the stiffness of the coefficients, we know that  this constant will be at least inversely proportional to $\epsilon$, i.e. $\tilde C \sim \epsilon^{-\ell}$,  for some  $\ell\geq 1$. 
    Bottom line, the bound \eqref{bound-numerical-error}  implies that  for each $\epsilon$ fixed, the Euler discretization $\zz^{\delta}_k$ tends to $\zz_t$, as $\delta$ tends to zero {\em and} $t\rightarrow \infty$ (which is what we need). \footnote{That is, it is an asymptotic result,  and does not imply the convergence of the discretization $\z_k^{\delta}$ to $\z_t$ for each $t$ fixed, as $\delta$ tends to zero. This is in contrast with bounds of the form \eqref{uitdefinition}, which are stronger as they are non-asymptotic and hence also  imply  convergence of the finite time marginals. }  However  $\delta$ needs to be scaled with $\epsilon$ if we want the RHS of  \eqref{bound-numerical-error} to go to zero  as $\epsilon$ tends to zero.
    \item  The proof of  \eqref{bound-numerical-error} is done by applying  the results of \cite{mattingly2002ergodicity}.  In that scheme of proof it would  be very laborious to track the dependence of the various constants on $\epsilon$ and $\beta$. In particular from that proof it is difficult to know how $G$ and $\tilde G$ depend on the various parameters. A more detailed analysis of the constant $\tilde{G}$ has been carried out in \cite[Theorem 2]{stoltz2018longtime} (more comments on this in the next section).    Another route  would have been to use the methods of proof in \cite{angeli2023uniform}, which have been designed to produce UiT convergence results for numerical approximations. However the assumptions needed to employ those methods are not satisfied by the overall drift of the system \eqref{eq:theta_update}- \eqref{latent_update}, so those schemes of proof (which in fact use the same principles that we have employed here for our UiT averaging result, i.e. they also leverage the concept of Strong Exponential Stability) would need to be non-trivially extended. \footnote{When producing our UiT averaging result we will only need Strong Exponential Stability (SES) of the FrP \eqref{eqn:frozen} and of the averaged process \eqref{eq:theta_bar_sde}, see Lemma  \ref{lemma:fastsemigroupderests} and Lemma \ref{lem:avgderest}, respectively. If we were to use the same general approach to prove a UiT estimate for the Euler discretization then we would need SES of the (semigroup associated with) system \eqref{eq:theta_update}- \eqref{latent_update} as well. Unsurprisingly, this is where the difficulty arises, in the small $\epsilon$ regime - indeed, this would not be a problem at all for large $\epsilon$. } As we said, we don't do it here because we know that the Euler-Maruyama scheme is anyway suboptimal. 
    \end{itemize}
    }
\end{note}

Finally, putting together Theorem \ref{thm:main_thm} and Proposition \ref{prop:numerical_error}, we obtain the following. 

\begin{thm}\label{thm:combined result}
Let $\theta_k^{\delta}$ be the Euler-Maruyama discretization of \eqref{eq:theta_update},  as in Algorithm \ref{alg:ais} and let $\theta^{\star}$ be as in \eqref{eq:mmle}. Suppose the assumptions of Theorem \ref{thm:main_thm} and Proposition \ref{prop:numerical_error} hold. Then the following estimate holds
\begin{align*}
\left \vert\mathbb E f(\theta_k^{\delta}) - f(\theta^{\star})\right \vert^2 & \leq C \epsilon  + C_f^2\left(e^{-2 \mu t } |\theta_0-\theta^{\star}|^2+ \frac{2 d_{\theta}}{\beta} (1-e^{-2 \mu t })\right)\\ 
& +\tilde{C} U(z_0) e^{-\tilde \lambda k \delta} + G \delta^{\xi}+ G e^{-\lambda t} (1+ U(z_0)) \, ,
\end{align*}
where $C$ is a constant independent of $\epsilon, t, \delta, \beta$ and all the other constants and parameters are as in the statements of Theorem \ref{thm:main_thm} and Proposition \ref{prop:numerical_error}. 
\end{thm}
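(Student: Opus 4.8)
The plan is to deduce Theorem~\ref{thm:combined result} from Theorem~\ref{thm:main_thm} and Proposition~\ref{prop:numerical_error} by a single triangle inequality, interpolating between the Euler iterate $\theta_k^{\delta}$ and the optimiser $\theta^{\star}$ through the continuous-time slow process $\theta_t$ evaluated at $t=k\delta$. Concretely, writing $D:=\mathbb E f(\theta_k^{\delta})-\mathbb E f(\theta_t)$ for the discretization error and $R:=\mathbb E f(\theta_t)-f(\theta^{\star})$ for the averaging-plus-concentration error, I would expand
\[
\left\vert \mathbb E f(\theta_k^{\delta})-f(\theta^{\star})\right\vert^2 = (D+R)^2 = R^2 + 2DR + D^2,
\]
and treat the three summands separately. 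The term $R^2$ is bounded \emph{verbatim} by the right-hand side of \eqref{main estimate} from Theorem~\ref{thm:main_thm}, and this is precisely what produces the first three summands $C\epsilon + e^{-2\mu t}|\theta_0-\theta^{\star}|^2 + \tfrac{2 d_\theta}{\beta}(1-e^{-2\mu t})$ of the claimed bound, the constant in front of $\epsilon$ absorbing the (fixed) quantity $\|\nabla f\|_\infty^2+\|\nabla^2 f\|_\infty^2$.

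For the remaining two summands I would use Proposition~\ref{prop:numerical_error}. A small point to address first is that \eqref{bound-numerical-error} is stated for functions $g$ of the \emph{joint} state $\z=(\theta,X)$, whereas here $f$ depends on $\theta$ only; this is harmless, since one simply applies the Proposition to $g(\theta,X):=f(\theta)$, which inherits the required smoothness and (at most quadratic) growth from $f\in\mathcal C^2_b$. This yields
\[
|D| \leq \tilde C\, U(\z_0)\, e^{-\tilde\lambda k\delta} + G\,\delta^{\xi} + \tilde G\, e^{-\lambda t}(1+U(\z_0)) =: \mathcal E,
\]
the quantity appearing in the second line of the statement. Since $f\in\mathcal C^2_b$ is bounded, we also have the crude estimates $|D|\leq 2\|f\|_\infty$ and $|R|\leq 2\|f\|_\infty$, whence $D^2\leq 2\|f\|_\infty|D|$ and $|2DR|\leq 4\|f\|_\infty|D|$. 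Adding these gives $2DR+D^2 \leq 6\|f\|_\infty\,|D| \leq 6\|f\|_\infty\,\mathcal E$, and absorbing the factor $6\|f\|_\infty$ into the (already $f$- and parameter-dependent) constants $\tilde C, G, \tilde G$ reproduces exactly the discretization terms of the claimed estimate. Setting $t=k\delta$ throughout and collecting terms completes the argument.

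I do not expect a genuine difficulty here: the statement is essentially a bookkeeping corollary, and all the analytic work is already contained in Theorem~\ref{thm:main_thm} and Proposition~\ref{prop:numerical_error}. The only two points requiring a word of care are (i) the reduction of the joint-variable Proposition to the $\theta$-marginal function $f$, and (ii) the passage from the \emph{squared} quantity on the left-hand side to a bound in which the discretization error $\mathcal E$ enters \emph{linearly}; the latter is what forces the use of the boundedness of $f$ in order to linearise $D^2$ and the cross term $2DR$. One could alternatively invoke $(D+R)^2\leq 2D^2+2R^2$, but then the discretization terms would appear squared, and the clean linear form stated in the theorem is obtained precisely by exploiting $|D|\leq 2\|f\|_\infty$ as above. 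Finally, it is worth stressing that the resulting estimate inherits the asymptotic (rather than non-asymptotic in $\delta$) character of Proposition~\ref{prop:numerical_error}: as already flagged in Note~\ref{note:numerics}, the constant $\tilde C$ hides a dependence on $\epsilon$, so the bound is genuinely informative only in the regime $t\to\infty$, $\delta\to 0$ with $\delta$ suitably scaled against $\epsilon$.
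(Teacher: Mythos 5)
Your proposal is correct and follows essentially the same route as the paper, which states Theorem~\ref{thm:combined result} as a direct combination of Theorem~\ref{thm:main_thm} and Proposition~\ref{prop:numerical_error} via the triangle inequality at the intermediate process $\theta_t$ with $t=k\delta$, without spelling out the details. Your explicit handling of the cross term --- using the boundedness of $f\in\mathcal C^2_b$ to linearise $D^2+2DR$ so that the discretization error enters linearly while the Theorem~\ref{thm:main_thm} terms appear exactly as stated --- is precisely the bookkeeping the paper leaves implicit, and your reduction of the joint-variable bound to $g(\theta,X):=f(\theta)$ is the correct way to invoke Proposition~\ref{prop:numerical_error}.
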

The above result guarantees convergence of the Euler discretization $\theta_k^{\delta}$ to $\theta^{\star}$.

\subsection{Relation to literature}\label{sec:relevant_work}

Slow-fast systems of SDEs have already been used in optimization, in Bayesian inverse problems and  for MMLE.  We mention here the works which are most related in spirit to the present work, without any claim to completeness.  
An interacting system of SDEs has been used in \cite{pavliotis2022derivative}, in the context of Bayesian inversion (see also references within that work, which are extensive). The work \cite{pavliotis2022derivative} is also based on a Langevin diffusion, not dissimilar to   the one we are using here, equation \eqref{eq:theta_bar_sde}. However in \cite{pavliotis2022derivative} the underlying statistical model does not incorporate latent variables; that is, the marginal likelihood of interest, $p_{\theta}(y)$,  is not specifically seen as a marginal with respect to latent variables, so the fast variables are not used, as we do here, to evaluate the conditional distribution $p_{\theta}(x\vert y)$.  In \cite{pavliotis2022derivative} the fast variables (which are different from the ones in this paper) are instead used for local exploration in parameter space.  Multiscale dynamics have also been used in the context of neural networks \cite{chaudhari2018deep, kantas2019sharp},  to smoothen the loss function, and in controlled, non-asymptotic versions of annealing-like procedures \cite{breiten2021stochastic}. 

As we have already mentioned, more recently, \cite{kuntz2023particle} studied an {interacting} particle system designed for the same purpose of solving the MMLE problem,  and proposed an algorithm called particle gradient descent (PGD). This algorithm operates $N$  particles to estimate the expectation, which are propagated in parallel, and thus can be seen as a space analogue of running a Markov chain. While \cite{kuntz2023particle} provided a limited analysis of the continuous-time gradient flow that underlies the proposed SDE, a slightly modified version of this algorithm was proposed by \cite{akyildiz2023interacting}, termed the interacting particle Langevin algorithm (IPLA). The work \cite{akyildiz2023interacting} provided a full analysis of IPLA in Wasserstein-2 distance, imposing strong convexity and global Lipschitz conditions negative joint log-likelihood.

As we have already mentioned in the introduction, the algorithm we propose here is a reinterpretation, in statistical context, of algorithms which have been used for a long time in molecular dynamics, such as  HMM \cite[Chapter 9]{weinan2011principles}, EFC \cite{kevrekidis2009equation} or TAMD \cite{stoltz2018longtime}. Some theoretical guarantees have already been provided for such methods, see again \cite{weinan2011principles} and references therein. However, on the theoretical front, the paper which is closest to the spirit to the results in this paper is \cite{stoltz2018longtime}, which, aside from the already mentioned \cite{crisan2022poisson} and the present paper,  contains  the only other  averaging result on long time horizons (and for systems of SDEs) that we are aware of.  To explain the difference between the averaging result produced to prove
 Theorem \ref{thm:main_thm}  of this paper, namely Proposition  \ref{prop:averaging} (see next section) and the averaging result in  \cite{stoltz2018longtime}  let us observe that the multiscale dynamics $(\theta_t, X_t)$ in \eqref{eq:theta_update}- \eqref{latent_update} has, for each $\epsilon$ and $\beta$ fixed,  an invariant measure, let us call such a measure $\rho^{\epsilon, \beta}$, which is a measure on $\R^{d_{\theta}}\times \R^{d_x}$. When $\epsilon$ is small, it is natural to expect that the $x-$ marginal of this measure, i.e. the measure  $\bar{\rho}^{\epsilon, \beta} = \int_{\R^{d_x}} dx \, \rho^{\epsilon, \beta}(\theta, x)$, is close to the measure $\bar{\rho}_{\beta}$ introduced in \eqref{eqn:rho-beta}, which is the invariant measure of the averaged process $\bar{\theta}_t$ (solution of \eqref{eq:averaged_theta_bar}). Hence in \cite{stoltz2018longtime} the authors proceed  by first showing that $(\theta_t, X_t)$ converges, as $t \rightarrow \infty$, and for each $\epsilon, \beta$ fixed, to $\bar{\rho}^{\epsilon, \beta}$ (with a rate of convergence which is shown to be independent of $\epsilon$). They then show, through a careful and very nice perturbative expansion in $\epsilon$ of the measure $\bar{\rho}^{\epsilon, \beta}$, that $\bar{\rho}^{\epsilon, \beta}$ is close to $\bar{\rho}^{\beta}$, when $\epsilon$ is small.  So,  in the analysis of \cite{stoltz2018longtime}, the overall convergence of \eqref{eq:theta_update} to $\bar \rho^{\beta}$ is proved by combining the stability property of  \eqref{eq:theta_update}-\eqref{latent_update} to the analysis of the asymptotic bias.  Hence, the estimate in  \cite{stoltz2018longtime} results in a bound of the form
 \begin{equation}\label{eqn:gabriel}
 \|  \theta_t - \bar{\rho}^{\beta}\|_{*} \leq c\epsilon + ce^{-Kt}
 \end{equation}
 where $\| \cdot \|_{*}$ is an appropriate norm (for detail see \cite{stoltz2018longtime}). The above bound is weaker than a bound of the form \eqref{uitdefinition} (and this does not depend on the fact that in \cite{stoltz2018longtime} the authors use a norm different from ours, it depends on the fact that the bound \eqref{uitdefinition} is non-asymptotic and it implies also convergence of the finite time marginals, the above bound does not.) Moreover \cite{stoltz2018longtime} does not contain the analysis of the discretization error. However it does accommodate for the case when the slow and fast dynamics evolve at different temperatures, which is something we don't do here, as we don't need it for our purposes. 
  For further comparison between these works we refer the reader to  \cite{crisan2022poisson}.  

We now come to compare the algorithm we have presented in this paper with SOUL, and explain why SOUL can be interpreted as a two-scale algorithm for MMLE.

\subsubsection{Relation to SOUL} \label{sec:comparison_SOUL}
The SOUL algorithm \cite{de2021efficient}  aims at explicitly implementing the EM algorithm by alternating inexact MCMC methods (such as unadjusted Langevin algorithm) for realising E-step in \eqref{eq:theEM-E-Step} and gradient-based optimisation for realising the M-step in \eqref{eq:theEM-M-Step}. From our perspective, the SOUL algorithm is implemented to realise the averaging procedure explicitly: To update $\theta_k$ at iteration $k$, a separate, long Markov chain $X_k^{(m)}$ with $m = 0, \ldots, M$ is run, with the assumption that this Markov chain (e.g. unadjusted Langevin algorithm) would converge to a measure that is close to the stationary measure of the chain in $M$ iterations. These MCMC samples are then plugged into gradient estimates to update $\theta$. We recall the EM algorithm steps as given in \eqref{eq:theEM-E-Step}--\eqref{eq:theEM-M-Step} which can be succinctly described iteratively solving $\theta_{k} \in \arg \min_{\theta\in \mathbb{R}^{d_\theta}} \mathbb{E}_{p_{\theta_{k-1}}(x|y)} \left[ U(\theta, x)\right]$. Therefore, given a starting point $\theta_{k-1}$, the main idea behind SOUL is to first implement a ULA-type algorithm to sample from $p_{\theta_k}(x |y) \propto p_{\theta_k}(x, y)$. For fixed $\theta_{k-1}$, the SOUL algorithm proceeds running a ULA chain with a warm start $X_{k}^{(0)} = X_{k-1}^{(M)}$, as follows:
\begin{align}\label{eqn:soul_latent_update}
X^{(m)}_{k} = X^{(m-1)}_{k} - \latentstep \nabla_x U(\theta_{k-1}, X_{k}^{(m-1)}) + \sqrt{2\latentstep}W_{k+1}
\end{align}
for $m = 0, \ldots, M$. This chain should be run sufficiently long for samples to be approximately from the stationary distribution, that is, $p_{\theta_{k-1}}(x|y)$ (which is the stationary measure of the \textit{frozen} process in our context). Given these samples from $p_{\theta_{k-1}}(x|y)$, in order to maximise the expectation $\mathbb{E}_{p_{\theta_{k-1}}(x|y)} \left[ U(\theta, x)\right]$, the last $\tilde{M}$ samples (after burn-in) are used to estimate the gradient: 
\begin{align}\label{eqn:soul_theta_update}
\theta_k = \theta_{k-1} - \frac{\thetastep}{\tilde{M}} \sum_{m=1}^{\tilde{M}} \nabla_{\theta} U\left(\theta_{k-1}, X_t^{(M-\tilde{M}+m)}\right).
\end{align}
One can see that this algorithm, in some sense, takes the averaging \textit{literally}: For every parameter update, an explicit and long Markov chain is simulated, whose samples are then used to estimate the gradient for optimisation of $\theta$. Our approach is to have the same effect without the need of running Markov chains for every update of $\theta$, via the use of stochastic averaging, which is the natural framework to implement such algorithms.
\subsubsection{Relation to interacting particle algorithms} \label{sec:comparison_IPLA}
Another relevant class of methods, as mentioned in the introduction, are the ones based on \textit{interacting particle systems}, namely PGD \cite{kuntz2023particle} and IPLA \cite{akyildiz2023interacting}.

The work \cite{kuntz2023particle} developed a formulation based on gradient flows, which resulted in the system of SDEs (in our notation)
\begin{align}
    \md {\theta}^N_t &= -\frac{1}{N}\sum_{j=1}^N \nabla_{\theta} U({\theta}^N_t, {X}_t^{j, N})\md t 
 \label{eq:ContIPS_theta_kuntz} \\
    \md {X}_t^{i, N} &= -\nabla_x U({\theta}^N_t, {X}_t^{i, N})\md t + \sqrt{2}\md {B}_t^{i, N},\label{eq:ContIPS_x_kuntz}
\end{align}
for $i = 1, \ldots, N$ and proposed its Euler-Maruyama discretisation as a practical algorithm, called particle gradient descent (PGD). This can be seen as direct ``space analogue'' of SOUL, as PGD algorithm, instead of running separate chains for every $\theta_t$, retains $N$ particles in parallel. A nonasymptotic analysis of this algorithm was recently provided in \cite{caprio2024error} for the strongly convex setting. Following the approach in \cite{kuntz2023particle}, \cite{akyildiz2023interacting} modified this system of SDEs slightly to arrive at
\begin{align}
    \md {\theta}^N_t &= -\frac{1}{N}\sum_{j=1}^N \nabla_{\theta} U({\theta}^N_t, {X}_t^{j, N})\md t + \sqrt{\frac{2}{N}}\md {B}_t^{0, N}
 \label{eq:ContIPS_theta_aky} \\
    \md {X}_t^{i, N} &= -\nabla_x U({\theta}^N_t, {X}_t^{i, N})\md t + \sqrt{2}\md {B}_t^{i, N},\label{eq:ContIPS_x_aky}
\end{align}
for $i = 1, \ldots, N$, by adding scaled noise to $\theta$-dimension. The discretisation of this set of SDEs resulted in an algorithm named Interacting Particle Langevin Algorithm (IPLA). \cite{akyildiz2023interacting} showed that the stationary distribution of the system \eqref{eq:ContIPS_theta_aky}--\eqref{eq:ContIPS_x_aky} concentrates on $\theta^\star$ (true maximum marginal likelihood estimate) as $N \to \infty$, which plays a similar role to the inverse temperature parameter $\beta$ in our setting. However, naturally, the concentration requires large $N$ which translates into simulating more particles within the setting of PGD and IPLA. 

We should note however that while our bounds are given in weak error, \cite{akyildiz2023interacting} provide bounds in strong error. Similarly, for the PGD, within the strongly convex setting, \cite{caprio2024error} proved their bounds in strong error.

Parallels between these methods (specifically IPLA, as it also injects stochasticity into $\theta$-updates) can be drawn, as we can also reinterpret \eqref{eq:ContIPS_theta_aky}--\eqref{eq:ContIPS_x_aky} within a stochastic averaging framework (a full study is beyond the scope of the current paper). Intuitively, notice that for fixed $\theta_t^N$, the particle system in \eqref{eq:ContIPS_x_aky} for $i = 1, \ldots, N$ have the stationary product measure (as all dimensions are independent for frozen $\theta$ dimension) $p(x_1, \ldots, x_N | y) = \prod_{i=1}^N p(x_i | y)$. {Thus, if we accelerate the SDE \eqref{eq:ContIPS_x_aky} using $\varepsilon$ in the usual way}, we can see that the \textit{averaged} dynamics given the stationary measure of the collection of latent variable processes takes the form
\begin{align}
    \md {\theta}^N_t &= -\frac{1}{N}\sum_{j=1}^N \left( \int \nabla_{\theta} U({\theta}^N_t, x_i) p(x_i | y) \md x_i \right) \md t + \sqrt{\frac{2}{N}}\md {B}_t^{0, N}
\end{align}
which simplifies exactly to the averaged process in \eqref{eq:theta_bar_sde} with $\beta = N$. We leave more rigorous extension of our work into this direction for future work.

Finally, we also mention the \textit{implicit diffusion} approach \cite{marion2025implicit}. This approach is similar to the particle-based approaches presented above, to optimise general cost functions with a two time-scale system. Specialised to the MMLE problem (and rewritten with fixed $\epsilon$), the approach in \cite{marion2025implicit} results in a system of (idealised) SDEs
\begin{align}
\md \theta_t &= -\epsilon \int \nabla_{\theta} U(\theta_t, x) p_{\theta_t}(x|y) \md x \, \md t \label{eq:implicit-diff-theta} \\
\md X_t &= -\nabla_x U(\theta_t, X_t) \md t + \sqrt{2} \md W_t \label{eq:implicit-diff-x}
\end{align}
The algorithm implements a time-discretisation of the above SDEs (with a particle approach to approximate \eqref{eq:implicit-diff-theta} in space), which is identical to the PGD algorithm \citep{kuntz2023particle} when $\epsilon = 1$. We make two general remarks about the relationship our work and this method: First, the work of \citet{marion2025implicit} provides a bound that vanishes as $t \to \infty$, in particular under their assumptions (which are different from ours), they obtain a bound of the form $\mathcal{O}(1/t\epsilon + 1/t + \epsilon)$. This bound is asymptotically $\mathcal{O}(\epsilon)$ as in our case in Theorem~\ref{thm:main_thm}. However, unlike our continuous-time result, this bound does not hold, for every fixed $t$, as $\epsilon \to 0$ (i.e., it is not uniform in time in the usual sense referenced in averaging literature, see, e.g., \citet{crisan2022poisson}). Second, the main discretized result obtained in \cite{marion2025implicit} depends upon use of the law of $X_k$ (hence it is just a time-discretisation of \eqref{eq:implicit-diff-theta}-\eqref{eq:implicit-diff-x} without further space discretization), which in practice is not available. The further discretization error induced when approximating the law of $X_k$ is not studied in \cite{marion2025implicit}. The discretization used here is an Euler-Maruyama scheme on the continuous slow-fast dynamics, and not dependent on knowledge of the law.

\section{Statement of assumptions and proofs}\label{sec:sec 4}
The proofs of all results can be found in the Supplementary Material \cite{suppMat}. In this section we first state our assumptions, before we state the results needed for Theorem \ref{thm:main_thm} and Proposition \ref{prop:numerical_error}. After stating our assumptions, we split our analysis into three main sections. In the first section, Section \ref{subsec:averaging}, we state Proposition \ref{prop:averaging}, a uniform-in-time averaging bound between \eqref{eq:theta_bar_sde} and \eqref{eq:theta_update}, using the results in Section~\ref{sec:sa_background}. Secondly, in Appendix B.6 of the Supplementary Material \cite{suppMat} we provide a concentration result, Proposition 1, on the stationary measure of \eqref{eq:theta_bar_sde} around the MMLE. Proposition \ref{prop:averaging} here and Proposition \ref{prop:concentration bound} from \cite{suppMat} combine to give Theorem \ref{thm:main_thm}. Let us first list our assumptions. 
\subsection{Assumptions}\label{subsec:assumptions}
In what follows, we  first list our assumptions and then comment on them, in turn. Throughout, $|\cdot|$ is the euclidean norm if the argument is vector-valued and the Frobenius norm\footnote{That is, the norm $\|A\|_F^2 \coloneqq \sum^{n}_{i=1}\sum^{d}_{j=1} (A_{ij})^2$ for $A \in \R^{n\times d}$} if it is matrix-valued. We make a standing assumption, and we don't repeat this in every statement,  that $U \in \mathcal{C}^3(\R^{d_x}\times \R^{d_\theta})$. That is, all partial derivatives of $U$ up to order 3 exist and are continuous. 

\begin{assumption}\label{ass:growthcoeffs} (Growth of coefficients). The function $U$ satisfies the following conditions: 
	\begin{enumerate}
		\item 
		There exists a constant $L>0$ such that
		\begin{equation}
		|\nabla U(\theta,x) - \nabla U(\theta',x')| \leq L (|\theta-\theta'|+|x-x'|),\end{equation}
		and some constant $C>0$, independent of $\theta$, such that
		\begin{equation} |\nabla_x U(\theta,x)| \leq C(1+|x|),
		\end{equation}
		for all $x,x' \in \R^{d_x}$ and $\theta,\theta' \in \R^{d_\theta}$.
		In particular this implies that there exists a constant $D\geq 0$ (independent of $x\in \R^{d_x}, \theta \in \R^{d_\theta}$) such that \begin{equation}\label{eqn:secderbound}
		    |\nabla_x \partial_{\theta_i}U(\theta,x)|^2 \leq D
		\end{equation} for all $x \in \R^{d_x}, \theta \in \R^{d_\theta}$ and $1\leq i \leq d_\theta$.
		\item \label{item:bdd3rdDeriv} There exists $C\geq 0$ (again, independent of $x\in \R^{d_x}, \theta \in \R^{d_\theta}$) such that \begin{equation}\label{eqn:thirdderbound}|\nabla^3 U(\theta,x)| \leq C.
        \end{equation}
	\end{enumerate}
\end{assumption}

\begin{assumption}\label{assmp:convexity} Let $z = (\theta, x)$. There exists a constant $\mu > 0$ such that
\begin{align*}
\langle z - z', \nabla_z U(z) - \nabla_z U(z') \rangle \geq \mu |z - z'|^2,
\end{align*}
for all $z, z' \in \R^{d_\theta + d_x}$, where $\langle \cdot \rangle$ denotes Euclidean scalar product.
\end{assumption}

\begin{assumption}\label{ass:strongconvexfast} (Strong convexity in $x$ of the FrP \eqref{eqn:frozen}). There exists a positive constant $\kappa>0$ such that for any $\theta\in \mathbb{R}^{d_\theta}, x\in \mathbb{R}^{d_x}, \xi\in \mathbb{R}^{d_x}$ we have
	\begin{equation}\label{eqn:fastStrongConvex}
	\begin{aligned}
	&-2 \sum_{i,j=1}^n \partial_{x_i} \partial_{x_j} U(\theta,x) \xi_i\xi_j \leq -\kappa \lvert\xi\rvert^2.
	\end{aligned}
	\end{equation}
	
\end{assumption}

\begin{assumption}\label{ass:avgderest} (Strong convexity in $\theta$ of the averaged equation).
	There exists $\zeta_0>0$ independent of $\theta,x,\xi$ such that for any $\theta\in \mathbb{R}^{d_\theta}, x\in \mathbb{R}^{d_x}, \xi\in \mathbb{R}^{d_\theta}$
	\begin{equation}\label{eqn:slowstrongconvex1}
	\begin{aligned}
	&- \sum_{i,j=1}^n \partial_{\theta_i} \partial_{\theta_j} U(\theta,x) \xi_i\xi_j \leq -\left(\frac{d_\theta D^2}{\kappa}+\zeta_0\right) \lvert\xi\rvert^2,
	\end{aligned}
	\end{equation}
	with $D, \kappa$ as in Assumption \ref{ass:growthcoeffs} and Assumption \ref{ass:strongconvexfast} respectively.
\end{assumption}
Examples of functions $U$ satisfying all of the above assumptions are given below.
\begin{example}
We consider the following function $U$, with $d_x = d_\theta = d$:
    \begin{equation}\label{eqn:ex1U}
        U(x,\theta) = a|x|^2 + \tilde{a}|\theta|^2 + \sum^d_{i=1}b(\theta_i)g(x_i),
    \end{equation} where $a, \tilde{a}$ are constants, $a, \tilde{a} >0$ and $b,g \in \mathcal{C}_b^3(\R)$ are smooth functions (bounded and with bounded derivatives up to order $3$). Suppose there exists $\kappa>0$ such that
        \begin{equation*}
        2a-\|g''\|_\infty\|b\|_\infty \geq \kappa > 0.
    \end{equation*} Then if
    \begin{equation*}
        2\tilde{a}-\|b''\|_\infty\|g\|_\infty \geq \frac{d \|g'\|^2_\infty \|b'\|^2_\infty}{2\kappa},
    \end{equation*}
    $U$ as in \eqref{eqn:ex1U} satisfies all assumptions.
\end{example}

\begin{example}
We consider the following function $U$, again, with $d_x = d_\theta=d$:
    \begin{equation}\label{eqn:ex2U}
        U(x,\theta) = a|x|^2 + \tilde{a}|\theta|^2 + \sum^d_{i=1}h(x_i - \theta_i),
    \end{equation} where $a, \tilde{a}$ are constants, $a, \tilde{a} >0$ and $h \in \mathcal{C}_b^3(\R)$ is a smooth function (bounded and with bounded derivatives up to order $3$). Suppose there exists $\kappa>0$ such that
        \begin{equation*}
        2a-\|h''\|_\infty \geq \kappa > 0.
    \end{equation*} Then if
    \begin{equation*}
        2\tilde{a} \geq \left(1+\frac{d}{2\kappa}\right)\|h''\|_\infty ,
    \end{equation*}
    $U$ as in \eqref{eqn:ex2U} satisfies all assumptions.
\end{example}

Assumption \ref{assmp:convexity}, Assumption \ref{ass:strongconvexfast}
 and Assumption \ref{ass:avgderest} are all strong-convexity type assumptions but they do not imply each other. To clarify, let us make some comments on them. 
 Assumption \ref{assmp:convexity}  enforces strong convexity,  jointly in the variables $x$ and $\theta$. This assumption is used to obtain the concentration result of Proposition \ref{prop:concentration bound} in the Supplementary Material \cite{suppMat}.
 
 Assumption \ref{ass:strongconvexfast} implies the following 
\begin{align*}
\langle x - x', \nabla_x U(x,\theta) - \nabla_x U(x', \theta) \rangle \geq \frac{\kappa}{2} |x - x'|^2 \, , \quad \text{for all } x, x' \in \R^{d_x}, \, \theta \in \R^{d_\theta}
\end{align*}
It can be  therefore interpreted as a strong convexity assumption in the variable $x$, uniformly in $\theta$. Similarly, Assumption \ref{ass:avgderest} is a strong convexity assumption in the variable $\theta$, uniformly in $x$. These two together do not imply joint strong convexity, so Assumption \ref{ass:strongconvexfast}  and Assumption \ref{ass:avgderest} do not imply Assumption \ref{assmp:convexity}.  {Moreover we observe that the constants appearing in Assumption \ref{ass:strongconvexfast} and Assumption \ref{ass:avgderest} are important (in particular, $\kappa$ appears in both). This means Assumption \ref{assmp:convexity} does not imply Assumption \ref{ass:strongconvexfast}  and Assumption \ref{ass:avgderest}, at least without enforcing $\mu = \max\{\kappa/2, \frac{d_\theta D^2}{\kappa}+\zeta_0\}$, which is not sharp. Because Assumption  \ref{assmp:convexity} is only needed for Proposition \ref{prop:concentration bound} in the Supplementary Material \cite{suppMat}, we keep the constant $\mu$ free.}

The reason why the constants appearing in Assumption \ref{ass:strongconvexfast}  and Assumption \ref{ass:avgderest}  are related is not just technical; this is indeed needed to prove our averaging result Proposition \ref{prop:averaging}.

In Note \ref{note:polygrowth} (see Section \ref{subsec:averaging}) we see how Assumption \ref{ass:growthcoeffs} may be relaxed with some work, see Note \ref{note:polygrowth} below, and then, in Lemma \ref{lem:strongmonotonicity}, we show a couple of consequences of our strong convexity conditions, Assumption \ref{ass:strongconvexfast} and Assumption \ref{ass:avgderest}, namely that they imply strong monotonicity conditions on the drift coefficients of \eqref{eq:theta_update}-\eqref{latent_update}. This fact  will be explicitly used in the proofs.  

\begin{lem}\label{lem:strongmonotonicity}
	Let Assumption \ref{ass:growthcoeffs} and Assumption \ref{ass:strongconvexfast} hold. Then there exist constants $r,R > 0$ such that
	\begin{equation}\label{eqn:stronglymonotonicx}
	-\langle \nabla_x U(\theta,x), x\rangle\leq -r|x|^2+R
	\end{equation}
	for every $x,\theta$. Furthermore if Assumption \ref{ass:avgderest} holds then there exist constants $\tilde{r},\tilde{R} > 0$ such that
	\begin{equation}\label{eqn:stronglymonotonic2}
	-\langle \nabla_\theta U(\theta,x), \theta\rangle\leq -\tilde{r}|\theta|^2 + \tilde{R}
	\end{equation}
	for every $x,\theta$.
\end{lem}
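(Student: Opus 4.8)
The plan is to derive both dissipativity estimates from the strong-monotonicity consequences of the two one-variable strong-convexity assumptions, evaluated against the fixed reference point $0$, and then to dispose of the resulting linear remainders by Young's inequality. The only input beyond Assumptions \ref{ass:strongconvexfast} and \ref{ass:avgderest} is the coercivity information carried by Assumption \ref{ass:growthcoeffs}, which enters the two cases quite differently; tracking this asymmetry is what the proof is really about.

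For the first bound, recall (as already noted after the statement of the assumptions) that Assumption \ref{ass:strongconvexfast} yields the strong monotonicity
$$\langle x - x', \nabla_x U(\theta,x) - \nabla_x U(\theta,x')\rangle \ge \tfrac{\kappa}{2}|x-x'|^2$$
uniformly in $\theta$. Taking $x'=0$ gives $\langle x,\nabla_x U(\theta,x)\rangle \ge \tfrac{\kappa}{2}|x|^2 + \langle x,\nabla_x U(\theta,0)\rangle$, whence $-\langle\nabla_x U(\theta,x),x\rangle \le -\tfrac{\kappa}{2}|x|^2 + |x|\,|\nabla_x U(\theta,0)|$. The crucial point is that the second part of Assumption \ref{ass:growthcoeffs} bounds $|\nabla_x U(\theta,0)| \le C$ \emph{uniformly in} $\theta$; a single application of Young's inequality $C|x|\le\tfrac{\kappa}{4}|x|^2 + C^2/\kappa$ then produces \eqref{eqn:stronglymonotonicx} with $r=\kappa/4$ and $R=C^2/\kappa$.

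For the second bound I would run the identical argument in the $\theta$ variable: writing $\zeta := \tfrac{d_\theta D^2}{\kappa}+\zeta_0$, Assumption \ref{ass:avgderest} gives $\langle\theta-\theta',\nabla_\theta U(\theta,x)-\nabla_\theta U(\theta',x)\rangle \ge \zeta|\theta-\theta'|^2$, and setting $\theta'=0$ reduces the claim to controlling $-\langle\theta,\nabla_\theta U(0,x)\rangle$. Here lies the main obstacle: in contrast with $\nabla_x U$, Assumption \ref{ass:growthcoeffs} provides \emph{no} growth bound on $\nabla_\theta U$ that is uniform in the other variable, so $\nabla_\theta U(0,x)$ is not a priori bounded in $x$. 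What is available is the mixed-derivative bound $|\nabla_x\partial_{\theta_i}U|^2\le D$ from the first part of Assumption \ref{ass:growthcoeffs}, which by the fundamental theorem of calculus gives only the linear estimate $|\nabla_\theta U(0,x)-\nabla_\theta U(0,0)|\le \sqrt{d_\theta D}\,|x|$. Feeding this into Cauchy–Schwarz and Young leaves a residual $\tfrac{d_\theta D}{\zeta}|x|^2$ term, and the delicate step is to argue that it does not survive — either by exploiting that strong convexity in $x$ together with the uniform growth of $\nabla_x U$ forces the fast equilibrium to remain in a bounded region $|x|\lesssim C/\kappa$, so that the values of $x$ relevant downstream are effectively controlled, or by observing that in the combined drift $-\langle\nabla U(\theta,x),(\theta,x)\rangle$ this residual is dominated by the dissipation $-\tfrac{\kappa}{4}|x|^2$ supplied by \eqref{eqn:stronglymonotonicx}. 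Once the cross term is absorbed, Young's inequality delivers \eqref{eqn:stronglymonotonic2} with, e.g., $\tilde r = \zeta/2$ and $\tilde R$ a constant depending only on the convexity constants and $|\nabla_\theta U(0,0)|$. I expect this absorption of the $x$–$\theta$ coupling to be the sole non-routine point; everything else is a direct transcription of the strongly-monotone-drift computation.
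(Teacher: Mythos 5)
Your treatment of \eqref{eqn:stronglymonotonicx} is correct and coincides with the paper's own proof: strong monotonicity from Assumption \ref{ass:strongconvexfast} tested against $x'=0$, the uniform-in-$\theta$ bound $|\nabla_x U(\theta,0)|\le C$ from Assumption \ref{ass:growthcoeffs}, then Young's inequality. The problem is \eqref{eqn:stronglymonotonic2}. You have correctly identified the real obstruction --- Assumption \ref{ass:growthcoeffs} controls $\nabla_x U$ linearly in $|x|$ \emph{uniformly in $\theta$}, but provides no analogous control of $\nabla_\theta U$ uniformly in $x$, so $\nabla_\theta U(0,x)$ is only known to grow linearly in $|x|$ --- and it is worth noting that the paper's own proof hides exactly this point: it asserts that the second bound follows ``with the exact same procedure'', which tacitly uses a uniform-in-$x$ bound on $|\nabla_\theta U(0,x)|$ that the assumptions do not supply. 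However, neither of your two proposed repairs proves the lemma as stated, which is a pointwise inequality for \emph{every} $(\theta,x)$ with $\tilde r,\tilde R$ independent of $x$; this uniformity is precisely what is needed downstream, since \eqref{eqn:stronglymonotonic2} is used to verify the dissipativity hypothesis (Assumption A4 of \cite{crisan2022poisson}) for the slow drift \emph{uniformly in the fast variable}. Restricting attention to a bounded region where the fast process ``effectively'' lives changes the statement and cannot be fed into that verification. Absorbing the residual $\tfrac{d_\theta D}{\zeta}|x|^2$ into the dissipation $-\tfrac{\kappa}{4}|x|^2$ from \eqref{eqn:stronglymonotonicx} proves only the joint estimate $-\langle\nabla U(z),z\rangle\le -c|z|^2+R$ for $z=(\theta,x)$ --- a strictly weaker statement, which in any case already follows from Assumption \ref{assmp:convexity} alone --- and not \eqref{eqn:stronglymonotonic2}: a cross term of order $|x|^2$ cannot be hidden inside a bound of the form $-\tilde r|\theta|^2+\tilde R$ with $\tilde R$ a constant.

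Moreover, the gap cannot be closed from the stated assumptions, because the linear growth you derived for $\nabla_\theta U(0,x)$ genuinely occurs. Take $d_\theta=d_x=1$, $\rho(x)=\sqrt{1+x^2}$, and $U(\theta,x)=ax^2+\tilde a\theta^2+K\rho(x)^2\arctan\big(\theta/\rho(x)\big)$. One checks that all second and third derivatives of the coupling term are bounded by a multiple of $K$, that $|\partial_x U(\theta,x)|\le C(1+|x|)$ uniformly in $\theta$, and hence that Assumptions \ref{ass:growthcoeffs}--\ref{ass:avgderest} all hold once $a$ and then $\tilde a$ are chosen large enough. Yet $\partial_\theta U(\theta,x)=2\tilde a\theta+K\rho(x)^3/(\rho(x)^2+\theta^2)$, so at $\theta=-1$ one finds $-\langle\nabla_\theta U(-1,x),-1\rangle=-2\tilde a+K\rho(x)^3/(\rho(x)^2+1)\to+\infty$ as $|x|\to\infty$, contradicting \eqref{eqn:stronglymonotonic2}. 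The honest conclusion is that the second half of the lemma (and the paper's one-line proof of it) requires an additional hypothesis symmetric to the one imposed on $\nabla_x U$ --- for instance $|\nabla_\theta U(\theta,x)|\le C(1+|\theta|)$ uniformly in $x$, or simply $\sup_x|\nabla_\theta U(0,x)|<\infty$ --- under which your argument (and the paper's) closes immediately, with $\tilde r=\zeta/2$ and $\tilde R$ as you wrote.
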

\begin{proof}
The proof of Lemma \ref{lem:strongmonotonicity} can be seen in \cite{suppMat}.
\end{proof}

\subsection{Uniform in time averaging estimate}\label{subsec:averaging}

In this section we show that the process \eqref{eq:theta_update} is a UiT approximation, as $\epsilon \rightarrow 0$, of the process \eqref{eq:theta_bar_sde}. This is stated in the main result of this section, Proposition \ref{prop:averaging}. We will be using Theorem 3.3 from \cite{crisan2022poisson}, so most of the work here will involve verifying the assumptions of such a theorem.  For the reader's convenience,  shortly after stating 
Proposition \ref{prop:averaging}, we recall \cite[Theorem 3.3]{crisan2022poisson}, using a notation which is more adapted to the present work. 

The conditions of \cite[Theorem 3.3]{crisan2022poisson} that will require the most effort to establish are  \cite[Equation (45) and (46)]{crisan2022poisson}, which are  semigroup derivative estimates on the FrP \eqref{eqn:frozen} and the averaged process \eqref{eq:theta_bar_sde} respectively. These are proved in   Lemma \ref{lemma:fastsemigroupderests} and Lemma \ref{lem:avgderest} below, which largely follow the scheme of proof of \cite[Lemma 2.8]{angeli2023uniform} and \cite[Lemma 2.10]{angeli2023uniform}, but in the case of constant diffusion coefficients. When the diffusion coefficients are constant there are some simplifications we can make here, and this will be commented on in the proofs of Lemma \ref{lemma:fastsemigroupderests} and Lemma \ref{lem:avgderest} themselves. In the following, and throughout, we use the notation $\|g\|^2_\infty = \sup_{(x, \theta) \in \R^{d_x}\times  \R^{d_\theta}} \left|g(x,\theta)\right|^2$, for any vector valued function $g$ on $\R^{d_x}\times \R^{d_\theta}$. 
\begin{prop}\label{prop:averaging}
	{Let Assumption \ref{ass:growthcoeffs}-\ref{ass:avgderest} hold.} Let $\theta_t^{\epsilon,\beta,\theta,x}$ be as in \eqref{eq:theta_update}, and $\bar{\theta}^\beta_t$ be as in \eqref{eq:theta_bar_sde}, here the notation $\theta_t^{\epsilon,\beta,\theta,x}$ and $\bar{\theta}^\beta_t$ is to indicate explicitly the dependence of the solutions to \eqref{eq:theta_update} and \eqref{eq:theta_bar_sde} on both the parameters $\beta,\epsilon$ as well as the initialisation data $x,\theta$. Then there exists $C_0>0$, independent of $\beta, x, \theta$ such that for any $f \in \mathcal{C}_b^2(\R^{d_\theta})$ the following holds
	\begin{equation}\label{eqn:uitAveraging}
	\left \vert
	\mathbb E f(\theta_t^{\epsilon,\beta,  \theta, x}) - \mathbb E f(\bar{ \theta}^{\beta,\theta}_t)\right \vert \leq \epsilon C\left(\left\|\nabla f\right\|_\infty^2 +  \left\|\nabla^2 f\right\|_\infty^2\right),  
	\end{equation} with $C = C_0(1+|x|^2 + |\theta|^2)$.
\end{prop}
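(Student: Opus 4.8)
The plan is to obtain \eqref{eqn:uitAveraging} as a direct application of the abstract uniform-in-time (UiT) averaging theorem \cite[Theorem 3.3]{crisan2022poisson} to the slow--fast pair \eqref{eq:theta_update}--\eqref{latent_update}, read with slow drift $b=-\nabla_\theta U$, fast drift $a=-\nabla_x U$, constant diffusions, and averaged drift $\bar b(\theta) = -\int \nabla_\theta U(\theta,x)\, p_\theta(x\vert y)\,\md x$. Most of the work then reduces to verifying the hypotheses of that theorem. The smoothness, global Lipschitz and linear-growth conditions on $a$ and $b$ are read off directly from Assumption \ref{ass:growthcoeffs}, while the dissipativity conditions that guarantee non-explosion, time-uniform moment bounds, and a unique invariant measure $p_\theta(x\vert y)$ of the frozen process \eqref{eqn:frozen} follow from the strong-monotonicity estimates \eqref{eqn:stronglymonotonicx}--\eqref{eqn:stronglymonotonic2} of Lemma \ref{lem:strongmonotonicity}. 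The only substantial inputs are the two Strong Exponential Stability (semigroup-derivative) estimates required by \cite{crisan2022poisson}, namely its Equations (45) and (46), which I would package as Lemma \ref{lemma:fastsemigroupderests} for the frozen process and Lemma \ref{lem:avgderest} for the averaged process.

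For Lemma \ref{lemma:fastsemigroupderests} I would differentiate the flow of \eqref{eqn:frozen} in its initial condition. The first variation $J_t = \partial_x X_t^\theta$ solves the linear equation $\dot J_t = -\nabla^2_{xx}U(\theta, X_t^\theta)J_t$, so Assumption \ref{ass:strongconvexfast} (equivalently $\nabla^2_{xx}U \succeq (\kappa/2)I$) gives the pathwise bound $|J_t|\le e^{-(\kappa/2)t}|J_0|$ and hence exponential decay, uniform in $\theta$, of $\nabla_x P_t^\theta g$; the second variation is forced by $\nabla^3 U$, bounded under Assumption \ref{ass:growthcoeffs}, and the same contraction controls $\nabla^2_x P_t^\theta g$. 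The derivative in the frozen parameter, $\partial_\theta X_t^\theta$, solves a linear equation forced by the cross term $\nabla^2_{x\theta}U$ whose size is governed by the constant $D$, so the $\kappa$-contraction again yields exponential decay of the mixed derivatives. Since the noise coefficient is constant, no Itô correction survives these differentiations, which is the simplification over \cite[Lemma 2.8]{angeli2023uniform} that makes the argument short.

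The hard part, and the main obstacle, is Lemma \ref{lem:avgderest}, the SES estimate for the averaged process \eqref{eq:theta_bar_sde}. With constant diffusion this reduces, by a synchronous-coupling / first-variation argument as above, to proving that $\bar b$ is strongly monotone, i.e. $\langle \nabla_\theta\bar b(\theta)\xi,\xi\rangle \le -\zeta_0|\xi|^2$. Differentiating $\bar b(\theta) = -\int \nabla_\theta U(\theta,x) p_\theta(x\vert y)\,\md x$ and using $\partial_\theta \log p_\theta(x\vert y) = -\nabla_\theta U + \mathbb{E}_{p_\theta}[\nabla_\theta U]$ produces the identity $\nabla_\theta \bar b(\theta) = -\mathbb{E}_{p_\theta}[\nabla^2_{\theta\theta}U] + \mathrm{Cov}_{p_\theta}(\nabla_\theta U)$. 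The covariance term is positive semidefinite and therefore opposes contraction; it is precisely the obstruction created by the $\theta$-dependence of the invariant measure. I would bound it using the Poincaré (or Brascamp--Lieb) inequality for $p_\theta(x\vert y)$, valid with constant $2/\kappa$ by Bakry--Émery under Assumption \ref{ass:strongconvexfast}, giving $\mathrm{Var}_{p_\theta}(\langle \nabla_\theta U,\xi\rangle) \le \tfrac{2}{\kappa}\mathbb{E}_{p_\theta}[|\nabla_x\langle\nabla_\theta U,\xi\rangle|^2]$, which the estimate $|\nabla_x\partial_{\theta_i}U|^2\le D$ turns into a term of order $\tfrac{d_\theta D}{\kappa}|\xi|^2$. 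Combining this with Assumption \ref{ass:avgderest} yields $\langle \nabla_\theta\bar b(\theta)\xi,\xi\rangle \le -\zeta_0|\xi|^2$. This is exactly why Assumptions \ref{ass:strongconvexfast} and \ref{ass:avgderest} must be coupled through $\kappa$: the margin $d_\theta D^2/\kappa$ built into Assumption \ref{ass:avgderest} is what the covariance contribution consumes, leaving the strict surplus $\zeta_0$ that drives the decay.

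With both semigroup-derivative estimates verified, \cite[Theorem 3.3]{crisan2022poisson} applies and delivers \eqref{eqn:uitAveraging} with a constant $C$ that is, by construction, independent of $t$; this uniform-in-time feature is inherited from the fact that the decay rates $\kappa$ and $\zeta_0$ above do not depend on $t$. Independence of $C$ from the initialisation $x,\theta$ comes from the same contractions washing out the initial data, and independence from $\beta$ follows because $\beta$ enters only through the slow diffusion coefficient $\sqrt{2/\beta}$, which affects neither the frozen dynamics nor the monotonicity constants of $\bar b$; hence none of the constants entering the bound depend on $\beta$.
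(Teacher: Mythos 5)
Your proposal follows the same top-level strategy as the paper: both reduce Proposition \ref{prop:averaging} to a verification of the hypotheses of \cite[Theorem 3.3]{crisan2022poisson}, with regularity and dissipativity read off from Assumption \ref{ass:growthcoeffs} and Lemma \ref{lem:strongmonotonicity}, the two Strong Exponential Stability estimates (Lemma \ref{lemma:fastsemigroupderests} and Lemma \ref{lem:avgderest}) as the only substantial inputs, and the same explanation of $\beta$-independence (the averaged semigroup estimates come from monotonicity of $\bar b$, not from ellipticity of the slow equation, so the $\beta$-dependent noise level is harmless). Where you genuinely differ is inside the two lemmas. For the frozen process the paper cites \cite[Lemma 2.8]{angeli2023uniform} and \cite[Lemma 2.10]{angeli2023uniform}, whereas you sketch the underlying variation-of-flow argument directly; same content. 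For the averaged process the paper differentiates the averaged drift via the representation formula (63) of \cite{crisan2022poisson}, writing the correction due to the $\theta$-dependence of $p_\theta(x\vert y)$ as $\int_0^\infty \rho_\theta\bigl((\partial_{\theta_i}\nabla_x U,\nabla_x P^\theta_s \partial_{\theta_j}U)\bigr)\,\md s$ and bounding it with the frozen decay \eqref{eqn:derestaveraging}; you write the same correction as $\mathrm{Cov}_{p_\theta}(\nabla_\theta U)$ and bound it by Poincar\'e with constant $2/\kappa$. These are two routes to the same object (up to sign, the covariance \emph{is} the integrated-semigroup cross term, via $\mathrm{Cov}_\rho(f,g)=\int_0^\infty \rho\bigl((\nabla_x f,\nabla_x P_s g)\bigr)\md s$ for the reversible frozen dynamics), and both consume exactly the margin built into Assumption \ref{ass:avgderest}. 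Your route is more elementary and makes the coupling of Assumptions \ref{ass:strongconvexfast} and \ref{ass:avgderest} through $\kappa$ transparent; the paper's route plugs directly into the machinery of \cite{crisan2022poisson} and does not rely on reversibility/constant diffusion of the frozen process, which is what makes Bakry--\'Emery available to you.

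Two caveats. First, a constants mismatch: your Poincar\'e bound gives $\tfrac{2 d_\theta D}{\kappa}\lvert\xi\rvert^2$ for the variance term, which does not literally sit inside the margin $\tfrac{d_\theta D^2}{\kappa}$ that Assumption \ref{ass:avgderest} reserves; as written, your absorption step needs $2D\le D^2$. This mirrors a looseness in the paper's own Cauchy--Schwarz step (done with square roots, the paper's bound is also $\tfrac{2d_\theta D}{\kappa}$), so it is a calibration issue rather than a structural flaw, but you should flag it rather than assert the absorption closes as stated. Second, and more importantly, the full averaged estimate \eqref{eqn:sgderestavged} controls \emph{second} derivatives of $\bar{P}^\beta_t f$, and for this strong monotonicity of $\bar b$ alone does not suffice: one also needs a bound on $\nabla^2\bar b$, i.e.\ condition \eqref{eqn:suffCond2} in the paper's proof, which the paper verifies via the representation formula and the simplification \eqref{eqn:derGen}. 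Your sketch stops at the first-variation/monotonicity statement. The missing bound does follow within your framework --- differentiate the covariance identity once more and use Assumption \ref{ass:growthcoeffs} \eqref{item:bdd3rdDeriv} together with Poincar\'e again --- but it must be stated and proved for the application of \cite[Theorem 3.3]{crisan2022poisson} to be complete.
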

\begin{proof}
    The proof of Proposition~\ref{prop:averaging} can be found in the Supplementary Material \cite{suppMat}.
\end{proof}

 In the following, and in the proofs in the Supplementary Material \cite{suppMat}, we use semigroup notation extensively, so we define it here. We use $P_t^{\theta}$ to denote the semigroup associated with the FrP \eqref{eqn:frozen}, and we use $\bar{P}^\beta_t$ to denote the semigroup associated with the averaged process \eqref{eq:theta_bar_sde}. That is,
\begin{equation}
    \left(P_t^{\theta}f\right)(x) = \mathbb{E}\big[f( X^\theta_t )\big],\quad f \in \mathcal{C}^2_b(\R^{d_x}),
\end{equation} and
\begin{equation}
    \left(\bar{P}^\beta_t f\right)(\theta) = \mathbb{E}\big[f( \bar{ \theta}^{\beta,\theta}_t )\big], \quad f \in \mathcal{C}^2_b(\R^{d_\theta}).
\end{equation}

 As we have already pointed out, in order to prove Proposition \ref{prop:averaging} we will heavily use \cite[Theorem 3.3]{crisan2022poisson}. For the reader's  convenience we restate it below. 


Consider a slow-fast  system of SDEs of the form 
\begin{align}
    d\theta_t & = b(\theta_t, X_t) dt+ \sqrt{2} \tilde{b}(\theta_t) dW_t^0 \label{eqn:slowgeneral thm}\\
    dX_t & =\frac{1}{\epsilon} a(\theta_t, X_t) dt + \sqrt{\frac{1}{\epsilon}} \tilde{a}(\theta_t) dW_t^1 \, \label{eqn:fast general thm}
\end{align}
with  $(\theta_t, X_t) \in \R^{d_{\theta}} \times \R^{d_x}$, $b, W^0$ and $a, W^1$ as in \eqref{eqn:slowintro}- \eqref{eqn:fastintro} and $\tilde{b}:\R^{d_{\theta}}\rightarrow \R^{d_{\theta} \times d_{\theta}}$, $\tilde{a}: d_{\theta} \rightarrow \R^{d_x \times d_x}$.  
The associated frozen process is given by
\begin{equation}\label{eqn:frozengeneralthem}
    dX_t^{\theta} = a(\theta, X_t^{\theta}) dt + \tilde{a}(\theta) dW^1_t \, .
\end{equation}
{
Under well-known conditions, for every $\theta$ fixed the frozen process \eqref{eqn:frozengeneralthem}  admits a unique invariant measure $\mu^{\theta}$ (which is a probability measure on $\R^{d_x}$). Hence we can consider the process
\begin{equation}\label{eqn:avggeneralthem}
d\bar \theta_t= \bar b(\bar \theta_t) dt + \sqrt{2} \tilde b(\bar\theta_t) dW^0_t \, , 
\end{equation}
where $\bar b (\theta) = \int_{\R^{d_x}} b(\theta, x) \mu^{\theta}(dx)$. We denote by $\mathcal{P}^{\theta}_t$ the semigroup associated to \eqref{eqn:frozengeneralthem}, and by $\bar{\cP}_t$ the semigroup associated to \eqref{eqn:avggeneralthem}.}

Proposition \ref{thm3.3} is split into two parts. Proposition \ref{thm3.3} \ref{propfirstversion} is a version of \cite[Theorem 3.3]{crisan2022poisson}, rewritten in the notation of this paper for the reader's convenience. Proposition \ref{thm3.3} \ref{propsecondversion} is a simplified version of \cite[Theorem 3.3]{crisan2022poisson}, in the case of constant diffusion and drift coefficients with bounded derivatives, where one can take $k=2$, $m_x= 0$ and  $m_y= 0$. 
\begin{prop}\label{thm3.3}(version of \cite[Theorem 3.3]{crisan2022poisson})

\begin{enumerate}[label=\roman*)]
    \item \label{propfirstversion}
    Suppose the process \eqref{eqn:slowgeneral thm}-\eqref{eqn:fast general thm} satisfies the following assumptions
    \begin{enumerate}[label=\textnormal{[C\arabic*]},ref={[C\arabic*]}]
			\item\label{assgenthm1} 
            The coefficient $b$ is in $C^{2,4}(\R^{d_{\theta}}\times \R^{d_x})$ and there exist $m, m'>0$ such that all the derivatives $\partial_{\theta}^{\gamma}\partial_x^{\tilde{\gamma}} b$ with $0\leq 2 |\gamma|_*+|\tilde{\gamma}|_{*}\leq 4$ grow at most like $1+|\theta|^m+|x|^{m'}$. Here $|\gamma|_*$ denotes the length of the multi-index $\gamma$, i.e. $|\gamma|_*=\sum \gamma^i$. 
            \item \label{assgenthm2} The coefficient $a$ is such that for every $\theta$ fixed,  $a(\theta, \cdot) \in C^{4+\nu}(\R^{d_x}) $ for some $0<\nu<1$. And there exists $m''>0$ such that all the derivatives $\partial_{\theta}^{\gamma}\partial_x^{\tilde{\gamma}} a$ with $0\leq 2 |\gamma|_*+|\tilde{\gamma}|_{*}\leq 4$ are bounded in $\theta$ and grow at most like $1+|x|^{m''}$ in the variable $x$. 
            \item \label{assgenthm3}  The diffusion coefficients $\tilde b$ and $\tilde a$ are {smooth, }bounded, and uniformly elliptic; by the latter fact we mean that there exist constants $\lambda_-, \lambda_+>0$ such that
            $$
            \lambda_-\leq \langle \tilde a\tilde a^T(\theta) \xi/|\xi|, \xi/|\xi|\rangle \leq \lambda_+ \, ,
            $$
            and an analogous bound holds for $\tilde b$ as well. 
            \item \label{assgenthm4}  There exist constants $r, C, \tilde{r}, \tilde C>0$ such that
            \begin{equation}
                \langle a(\theta,x), x\rangle \leq 
                -r |x|^2+C \, ,
            \end{equation}
             and, similarly, 
             \begin{equation}
                \langle b(\theta,x), \theta\rangle \leq 
                -r |\theta|^2+C \, .
            \end{equation}
        \item \label{assgenthm5}  {
        For all $k \in \{2,4\}$ and $\psi\in C^{0,k}(\R^{d_\theta}\times\R^{d_x})$ such that, for $1 \leq \lvert\tilde{\gamma}\rvert_{*} \leq k$, $|\partial_{x}^{\tilde{\gamma}}\psi(\theta, x)|$ grows at most like $1+|\theta|^{m_\theta} + |x|^{m_x}$, there exist $K, \kappa>0$ such that for all $\theta \in \R^{d_\theta}$ and $x \in \R^{d_x}$
        \begin{equation}\label{eqn:4der}	\lvert\partial_{x}^{\tilde{\gamma}}\mathcal{P}_{t}^{\theta}\psi^\theta(x) \rvert\leq K(1+|\theta|^{m_\theta} + |x|^{m_x})e^{-\kappa t}.
		\end{equation}
		Moreover there exist constants $\tilde{K},C>0$ such that for any $\psi\in C_b^2(\R^{d_\theta})$ we have
		\begin{equation}\label{eqn:2der}
			\sup_{1 \leq \lvert\gamma\rvert_* \leq 2}\lVert \partial^{\gamma}_{\theta} \bar{\cP}_{t}\psi \rVert_\infty  \leq \tilde{K}e^{-C t}\sum_{1 \leq \lvert\gamma\rvert_* \leq 2} \lVert \partial^{\gamma}_{\theta} \psi \rVert^2_\infty.
		\end{equation}}
    \end{enumerate} 
Then there exists a constant $C>0$, independent of $t$ but maybe dependent on $\theta, x$\footnote{This dependence on the initial conditions is studied in \cite{crisan2022poisson}, but is not the main focus of this paper.}, such that 
\begin{equation}\label{eqn:uitsimple}
    |\mathbf E f(\theta_t) - \mathbf E f(\bar\theta)| \leq \epsilon C (\|\nabla f\|_{\infty}+ \|\nabla^2 f\|_{\infty}), 
\end{equation}
for every $f \in C_b^2(\R^{d_{\theta}})$. 
\item \label{propsecondversion}
{
If we also assume that the diffusion coefficients $\tilde{a}$ and $\tilde{b}$ are constant, i.e. for all $\theta, x$ we have $\tilde{a}(\theta) = \tilde{a}$ and $\tilde{b}(\theta) = \tilde{b}$ for some $\tilde{a}, \tilde{b} > 0$, then \ref{assgenthm1}, \ref{assgenthm2}, and \ref{assgenthm5} can be relaxed to the following
    \begin{enumerate}[label=\textnormal{[$\tilde{C}$\arabic*]},ref={[$\tilde{C}$\arabic*]}]
			\item\label{assgenthm1_simple} 
            The coefficient $b$ is in $C^{2,2}(\R^{d_{\theta}}\times \R^{d_x})$, grows at most like $1+|\theta|+|x|$ and all the derivatives $\partial_{\theta}^{\gamma}\partial_x^{\tilde{\gamma}} b$ with $1\leq  |\gamma|_*+|\tilde{\gamma}|_{*}\leq 2$ are bounded.
            \item \label{assgenthm2_simple} The coefficient $a \in C^{2,2}(\R^{d_\theta}\times \R^{d_x})$ is bounded in $\theta$, grows at most like $1+|x|$ in the variable $x$, and all the derivatives $\partial_{\theta}^{\gamma}\partial_x^{\tilde{\gamma}} a$ with $1\leq |\gamma|_*+|\tilde{\gamma}|_{*}\leq 2$ are bounded.
            \setcounter{enumii}{4}
        \item
        \label{assgenthm5_simple}         
        There exist $K, \kappa>0$ such that for all $\psi\in C^{0,2}(\R^{d_\theta}\times\R^{d_x})$ such that, for all $1\leq |\tilde{\gamma}|_{*}\leq 2$, $|\partial_{x}^{\tilde{\gamma}}\psi(\theta, x)|\leq \Psi$ for some constant $\Psi > 0$, the following holds for all $\theta \in \R^{d_\theta}$ and $x \in \R^{d_x}$
        \begin{equation}\label{eqn:4dersimple}
			\sup_{1\leq |\tilde{\gamma}|_{*}\leq 2}\|\partial_{x}^{\tilde{\gamma}}\mathcal{P}_{t}^{\theta}\psi^\theta(x) \|_\infty\leq K\cdot\Psi \cdot e^{-\kappa t}.
		\end{equation}
		Moreover there exist constants $\tilde{K},C>0$ such that for any $\psi\in C_b^2(\R^{d_\theta})$ we have
		\begin{equation}\label{eqn:2dersimple}
			\sup_{1 \leq \lvert\gamma\rvert_* \leq 2}\lVert \partial^{\gamma}_{\theta} \bar{\cP}_{t}\psi \rVert^2_\infty  \leq \tilde{K}e^{-C t}\sum_{1 \leq \lvert\gamma\rvert_* \leq 2} \lVert \partial^{\gamma}_{\theta} \psi \rVert^2_\infty .
		\end{equation}
    \end{enumerate} 
    That is, let $\tilde{a}$ and $\tilde{b}$ be constant, and further assume \ref{assgenthm1_simple}, \ref{assgenthm2_simple}, \ref{assgenthm4} and \ref{assgenthm5_simple}. Then \eqref{eqn:uitsimple} holds.
}
\end{enumerate}
\end{prop}
\begin{proof}
    The proof of Proposition~\ref{thm3.3} can be found in the Supplementary Material \cite{suppMat}.
\end{proof}

\begin{note}\label{note333}
Before moving on to the proof of Proposition \ref{thm3.3}, some brief comments to help understand the role of each assumption. More in depth comments on this can be found throughout the paper \cite{crisan2022poisson}, and in particular in \cite[Note 2.1 and Note 3.4]{crisan2022poisson}. 
Conditions \ref{assgenthm1}-\ref{assgenthm3}  are standard growth and smoothness assumptions. Condition \ref{assgenthm4} implies the existence of a Lyapunov function for the frozen process \eqref{eqn:frozengeneralthem} and for the  slow process \eqref{eqn:slowgeneral thm}  (first and second condition in \ref{assgenthm4}, respectively). Assumption \ref{assgenthm5} requires exponential decay in time of the space-derivatives of the frozen (equation \eqref{eqn:4der}) and averaged semigroups (equation \ref{eqn:2der}). That is, it requires SES of the frozen and averaged semigroups. 
To explain in a simplified setting why SES is key to proving uniform in time convergence, let us consider two Markov semigroups, say $\mathcal T_t$ and $\overline{\mathcal {T}}_t$. With standard manipulations, the difference between any two Markov semigroups can be expressed in terms of the difference between their respective generators, say $\mathcal G$ and 
			$\bar{\mathcal G}$, as follows
			\begin{align*}
				(\overline{\mathcal {T}}_t \varphi)(z) -  (\mathcal{T}_t \varphi)(z) & =  \int_0^t ds \frac{d}{ds} \mathcal{T}_{t-s}\bar{\mathcal{T}}_s \varphi (z)  =  
				\int_0^t \!\!\!ds \,  \mathcal{T}_{t-s} (\bar{\mathcal G} - \mathcal G) \bar{\mathcal T}_s \varphi (z) \\
				&\leq \int_0^t ds \| \mathcal T_{t-s} (\bar{\mathcal G} - \mathcal G) \bar{\mathcal T}_s \varphi \|_{\infty}
				\leq \int_0^t ds \|  (\bar{\mathcal G} - \mathcal G) \bar{\mathcal T}_s \varphi \|_{\infty}
			\end{align*}
		If $\mathcal G$ and $\bar{\mathcal{G}}$ are differential operators then the latter difference involves derivatives of the semigroup $\bar{\mathcal T}_t$. If such derivatives decay exponentially fast in time, then the difference between such semigroups can be estimated by a constant (independent of time) rather than with exponential growth, which is what would happen by using Gronwall-type arguments. This line of reasoning, applied to the semigroups generated by the averaged dynamics and by the system \eqref{eqn:slowgeneral thm}-\eqref{eqn:fast general thm}, inspires our approach - though the precise proof does not exactly follow the above calculation and some further manipulations are required (to obtain the correct power of $\epsilon$ on the RHS). These further manipulations then involve also the derivatives of the frozen semigroup, see e.g. \cite[equation (97)]{crisan2022poisson}. Finally, a framework to prove uniform in time results has been recently put forward in \cite{schuh2024conditions}. This framework emphasizes that contractivity (either of the dynamics itself or of the approximating dynamics), plays a pivotal role when trying to prove uniform in time estimates. SES implies contractivity and this is another way of viewing the appearance of such a condition in this context. The proof of the SES estimates is the one where the convexity assumptions on the coefficients is used the most. 
\end{note}

We now state two lemmas, Lemma \ref{lemma:fastsemigroupderests} and Lemma \ref{lem:avgderest}, which establish \cite[Equation (45)]{crisan2022poisson} and \cite[Equation (46)]{crisan2022poisson} for our purposes. We leave the proofs of these for the Supplementary Material \cite{suppMat}.

\begin{lem}\label{lemma:fastsemigroupderests}

	Let Assumption \ref{ass:growthcoeffs} and \ref{ass:strongconvexfast} hold. Then the following holds
	
	\begin{equation}\label{eqn:derestaveraging}
	\left\| \nabla_x P_t^{\theta} g \right\|^2_\infty \leq e^{-\kappa t}  \left\| \nabla_x  g \right\|^2_\infty
	\end{equation}
	for all $t\geq 0$ and $g \in \tilde{\mathcal{C}}^1_b(\R^{d_x}\times \R^{d_\theta})$, where $\kappa>0$ is as in \eqref{eqn:fastStrongConvex}.
	Furthermore, there exist constants $c, C > 0$ such that
	\begin{equation}\label{eqn:secondDerEstaveraging}
	\left\| \nabla^2_x P_t^{\theta} g \right\|^2_\infty \leq Ce^{-ct}\left(\left\| \nabla^2_x  g \right\|^2_\infty + \left\| \nabla_x  g\right\|^2_\infty\right)
	\end{equation}
	for all $t\geq 0$ and $g \in \tilde{\mathcal{C}}^2_b(\R^{d_x}\times \R^{d_\theta})$. Here $\tilde{\mathcal{C}}^1_b(\R^{d_x}\times \R^{d_\theta})$ is the space of continuous real-valued functions on $\R^{d_x}\times \R^{d_\theta}$ with first order derivatives being continuous and bounded, and $\tilde{\mathcal{C}}^2_b(\R^{d_x}\times \R^{d_\theta})$ is similar with the second order derivatives also being continuous and bounded.
\end{lem}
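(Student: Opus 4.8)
The plan is to obtain both estimates by differentiating the flow of the FrP \eqref{eqn:frozen} with respect to its initial condition and exploiting the strong convexity of Assumption \ref{ass:strongconvexfast}. Write $X_t^{\theta,x}$ for the solution of \eqref{eqn:frozen} started at $x$. Because the diffusion coefficient is constant, differentiating \eqref{eqn:frozen} in $x$ annihilates the noise term, so the first variation (Jacobian) $J_t := \nabla_x X_t^{\theta,x}$ solves, \emph{pathwise}, the random linear ODE
\[
\dot J_t = -\nabla_x^2 U(\theta, X_t^{\theta,x})\, J_t, \qquad J_0 = I .
\]
This is precisely the simplification relative to the variable-diffusion setting of \cite[Lemma 2.8]{angeli2023uniform}: all variation processes become pathwise ODEs and no Itô corrections appear.

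For the first estimate I would differentiate under the expectation to get $\nabla_x (P_t^\theta g)(x) = \mathbb{E}\big[J_t^\top \nabla g(X_t^{\theta,x})\big]$, and control $J_t$ directly. Since Assumption \ref{ass:strongconvexfast} says $\nabla_x^2 U \succeq (\kappa/2) I$, for any vector $v$ we have
\[
\frac{d}{dt}\,|J_t v|^2 = -2\big\langle J_t v,\, \nabla_x^2 U(\theta, X_t^{\theta,x})\, J_t v\big\rangle \leq -\kappa\, |J_t v|^2 ,
\]
so Grönwall gives $|J_t v| \leq e^{-\kappa t/2}|v|$ (and the same for $J_t^\top$). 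Taking absolute values inside the expectation then yields $|\nabla_x (P_t^\theta g)(x)| \leq e^{-\kappa t/2}\|\nabla g\|_\infty$, which is \eqref{eqn:derestaveraging} after squaring and taking the supremum over $x$ and $\theta$.

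For the second estimate I would pass to the second variation $H_t := \nabla_x^2 X_t^{\theta,x}$, which solves the inhomogeneous linear equation $\dot H_t = -\nabla_x^2 U(\theta, X_t)\, H_t - \nabla_x^3 U(\theta, X_t)[J_t, J_t]$ with $H_0 = 0$, and use the chain rule to write
\[
\nabla_x^2 (P_t^\theta g)(x) = \mathbb{E}\big[J_t^\top \nabla^2 g(X_t^{\theta,x})\, J_t\big] + \mathbb{E}\big[\nabla g(X_t^{\theta,x}) \cdot H_t\big].
\]
The first term is bounded by $e^{-\kappa t}\|\nabla^2 g\|_\infty$ using $|J_t v|^2 \leq e^{-\kappa t}|v|^2$. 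For the second term I would bound $H_t$ by variation of constants: letting $\Phi_{t,s}$ be the propagator of the homogeneous part (which inherits $|\Phi_{t,s} w| \leq e^{-\kappa(t-s)/2}|w|$ from the Jacobian estimate), and using $|\nabla^3 U| \leq C$ from Assumption \ref{ass:growthcoeffs} together with $|J_s v|^2 \leq e^{-\kappa s}|v|^2$, I get
\[
|H_t| \leq \int_0^t e^{-\kappa(t-s)/2}\, C\, e^{-\kappa s}\, ds \leq \frac{2C}{\kappa}\, e^{-\kappa t/2}.
\]
Combining the two contributions gives $|\nabla_x^2 (P_t^\theta g)(x)| \leq e^{-\kappa t}\|\nabla^2 g\|_\infty + (2C/\kappa)\, e^{-\kappa t/2}\|\nabla g\|_\infty$, and squaring with $(a+b)^2 \leq 2a^2 + 2b^2$ and $e^{-2\kappa t}\leq e^{-\kappa t}$ yields \eqref{eqn:secondDerEstaveraging} with $c=\kappa$.

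The main obstacle is not any single inequality but the bookkeeping around the second variation: one must justify differentiating twice under the expectation (that $x \mapsto X_t^{\theta,x}$ is twice differentiable with the stated variation equations, and that the resulting processes are integrable enough to exchange derivative and expectation), and one must ensure the variation-of-constants bound for $H_t$ is \emph{uniform in $t$} — which is exactly where the interplay between strong convexity (giving the decaying propagator $\Phi_{t,s}$) and the bounded third derivatives (giving an integrable forcing) is essential. The constant-diffusion structure is what keeps every variation process a pathwise ODE and removes the Itô terms that would otherwise complicate both the differentiation step and the attendant moment bounds.
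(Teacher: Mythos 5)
Your proof is correct, but it follows a genuinely different route from the paper's. The paper does not argue from scratch: it verifies the hypotheses of \cite[Lemma 2.8]{angeli2023uniform} (Assumption 2.6 (15a) there holds by Assumption \ref{ass:strongconvexfast}, with $2\gamma=\kappa$, and (15b) holds because the diffusion is constant) and of \cite[Lemma 2.10]{angeli2023uniform} (via Assumption 2.9 there, with (16a) given by Assumption \ref{ass:strongconvexfast} and (16b)--(16c) again by constant diffusion), and then simply invokes those results to obtain \eqref{eqn:derestaveraging} and \eqref{eqn:secondDerEstaveraging}. You instead re-derive the estimates directly through the first and second variation processes, which---precisely because the noise is additive---are pathwise linear ODEs: strong convexity gives the decaying Jacobian and propagator, bounded third derivatives give an integrable forcing, and the variation-of-constants bound for $H_t$ is uniform in time. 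Both arguments rest on the same mechanism (Assumption \ref{ass:strongconvexfast} forcing exponential contraction of the derivative flows), but yours is self-contained and yields explicit constants (in particular $c=\kappa$ in \eqref{eqn:secondDerEstaveraging}), whereas the paper's citation-based proof buys brevity and robustness: the lemmas of \cite{angeli2023uniform} cover state-dependent diffusion coefficients, so the verification-style argument extends more readily beyond the additive-noise setting. The technical points you flag (twice differentiability of the flow in its initial condition and differentiation under the expectation) are standard under the standing assumption $U\in\mathcal{C}^3$ with globally Lipschitz $\nabla U$ and bounded $\nabla^3 U$; and since your bounds on $J_t$ and $H_t$ are deterministic and uniform in the randomness, the exchange of derivative and expectation requires no additional moment estimates.
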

\begin{proof}
    The proof of Lemma~\ref{lemma:fastsemigroupderests} can be found in the Supplementary Material \cite{suppMat}.
\end{proof}

\begin{lem}\label{lem:avgderest}
	Let Assumption \ref{ass:growthcoeffs}-\ref{ass:avgderest} hold. Then there exist constants $\tilde{c},\tilde{C} > 0$ such that
	
	\begin{equation}\label{eqn:sgderestavged}
	\left\|\nabla\bar{P}^\beta_t f\right\|_\infty^2 +  \left\|\nabla^2 \bar{P}^\beta_tf\right\|_\infty^2 \leq \tilde{C}e^{-\tilde{c}t}\left(\left\|\nabla f\right\|_\infty^2 +  \left\|\nabla^2 f\right\|_\infty^2\right)
	\end{equation}
	for all $t\geq 0$, $\beta >0$, and $f \in \mathcal{C}^2_b(\R^{d_\theta})$.
\end{lem}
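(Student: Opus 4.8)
The plan is to control the spatial derivatives of the averaged semigroup $\bar P_t^\beta$ through the first- and second-variation processes of the flow map $\theta \mapsto \bar\theta_t^{\beta,\theta}$. A structural simplification that I would exploit throughout is that the diffusion coefficient $\sqrt{2/\beta}$ in \eqref{eq:theta_bar_sde} is constant: differentiating the SDE with respect to the initial datum kills the noise term, so the variation processes solve \emph{pathwise linear ODEs} rather than SDEs. This removes the need to estimate any martingale or It\^o-correction terms (the simplification over \cite[Lemma 2.10]{angeli2023uniform} alluded to above) and makes all resulting bounds automatically independent of $\beta$, since the averaged drift $\bar b(\theta) = \nabla_\theta \log p_\theta(y) = -\int \nabla_\theta U(\theta,x)\,\mu^\theta(\md x)$, with $\mu^\theta(\md x) = p_\theta(x\mid y)\,\md x$, does not depend on $\beta$.

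First I would establish the key analytic input, namely \emph{strong convexity of the averaged potential}: the symmetric part of $\nabla_\theta \bar b(\theta)$ is bounded above by $-\zeta_0 \mathrm{Id}$, uniformly in $\theta$. Differentiating $\bar b$ under the integral, and through the $\theta$-dependence of $\mu^\theta$, yields the ``missing information'' identity
\begin{equation*}
\nabla_\theta \bar b(\theta) = -\,\mathbb E_{\mu^\theta}\!\big[\nabla_\theta^2 U(\theta,\cdot)\big] + \mathrm{Cov}_{\mu^\theta}\!\big(\nabla_\theta U(\theta,\cdot)\big).
\end{equation*}
The first term is $\leq -\big(\tfrac{d_\theta D^2}{\kappa}+\zeta_0\big)\mathrm{Id}$ by Assumption \ref{ass:avgderest}. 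The covariance term is positive semidefinite and must be absorbed: since Assumption \ref{ass:strongconvexfast} makes $\mu^\theta$ strongly log-concave in $x$, the Brascamp--Lieb (equivalently Poincar\'e) inequality bounds $\mathrm{Var}_{\mu^\theta}(\xi\cdot\nabla_\theta U)$ by a constant times $\kappa^{-1}\mathbb E_{\mu^\theta}[|\nabla_x(\xi\cdot\nabla_\theta U)|^2]$, which is controlled by the cross-derivative bound $D$ of Assumption \ref{ass:growthcoeffs}. Assumption \ref{ass:avgderest} is calibrated, via the term $\tfrac{d_\theta D^2}{\kappa}$, precisely so that this positive contribution is dominated, leaving $\nabla_\theta\bar b \leq -\zeta_0\mathrm{Id}$. \emph{This is the step I expect to be the main obstacle}: it is where all three convexity/growth assumptions must interlock, and it explains why the constants $\kappa$ and $\zeta_0$ are coupled across Assumptions \ref{ass:strongconvexfast}--\ref{ass:avgderest}.

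Given strong convexity, the first-derivative estimate follows quickly. Writing $\eta_t := \nabla_\theta \bar\theta_t^{\beta,\theta}$, which solves $\dot\eta_t = \nabla_\theta\bar b(\bar\theta_t)\,\eta_t$ with $\eta_0 = \mathrm{Id}$, the bound $\nabla_\theta\bar b\leq-\zeta_0\mathrm{Id}$ gives $\tfrac{\md}{\md t}|\eta_t v|^2 \leq -2\zeta_0|\eta_t v|^2$ for every $v$, hence $\|\eta_t\|\leq e^{-\zeta_0 t}$ pathwise (and so $\beta$-uniformly). Since $\nabla(\bar P_t^\beta f)(\theta) = \mathbb E[\eta_t^{\top}\nabla f(\bar\theta_t)]$ by the chain rule, this yields $\|\nabla\bar P_t^\beta f\|_\infty^2 \leq e^{-2\zeta_0 t}\|\nabla f\|_\infty^2$.

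For the second-derivative estimate I would differentiate once more. The second-variation process $\varphi_t := \nabla_\theta^2\bar\theta_t^{\beta,\theta}$ solves the pathwise ODE $\dot\varphi_t = \nabla_\theta\bar b(\bar\theta_t)\varphi_t + \nabla_\theta^2\bar b(\bar\theta_t)[\eta_t,\eta_t]$ with $\varphi_0 = 0$. This requires $\nabla_\theta^2\bar b$ to be bounded, which follows from the bounded third derivatives of $U$ (Assumption \ref{ass:growthcoeffs}, part 2) together with control of the first two $\theta$-derivatives of the conditional law $\mu^\theta$; the latter is exactly where the fast semigroup derivative estimates of Lemma \ref{lemma:fastsemigroupderests} enter, since the $\theta$-sensitivity of $\mathbb E_{\mu^\theta}[\,\cdot\,]$ can be expressed through a Poisson equation for the frozen generator whose derivatives are governed by those estimates. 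Since the source term $\nabla_\theta^2\bar b(\bar\theta_t)[\eta_t,\eta_t]$ is $O(e^{-2\zeta_0 t})$ by the first-variation bound, variation of constants against the contraction $\|\Phi(t,s)\|\leq e^{-\zeta_0(t-s)}$ gives the clean decay $\|\varphi_t\|\leq \tilde C e^{-\zeta_0 t}$ (the slower rate $\zeta_0$ dominates). Finally, since
\begin{equation*}
\nabla^2(\bar P_t^\beta f)(\theta) = \mathbb E\big[\eta_t^{\top}\nabla^2 f(\bar\theta_t)\eta_t\big] + \mathbb E\big[\nabla f(\bar\theta_t)\cdot\varphi_t\big],
\end{equation*}
the first term is $\leq e^{-2\zeta_0 t}\|\nabla^2 f\|_\infty$ and the second is $\leq \tilde C e^{-\zeta_0 t}\|\nabla f\|_\infty$; collecting these with the first-derivative estimate yields \eqref{eqn:sgderestavged} with a single rate $\tilde c$ (any $\tilde c < 2\zeta_0$) and constant $\tilde C$, both independent of $\beta$ as noted.
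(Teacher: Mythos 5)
Your proposal is correct in substance but reaches \eqref{eqn:sgderestavged} by a genuinely different route than the paper. The paper does not run the variation-process argument itself: it verifies two sufficient conditions — strong convexity of the averaged drift, \eqref{eqn:suffCond1}, and control of its second derivatives, \eqref{eqn:suffCond2} — and then invokes \cite[Lemma 2.10]{angeli2023uniform} as a black box; your first/second-variation computation, with the pathwise-linear-ODE simplification due to constant diffusion, is essentially a self-contained re-derivation of that cited lemma, and it is valid. The more interesting divergence is in the key step you correctly single out: the paper obtains the strong convexity of the averaged drift \emph{dynamically}, writing $\partial_{\theta_i}\big(\rho_\theta(\partial_{\theta_j}U)\big)$ via the representation formula (63) of \cite{crisan2022poisson} as the averaged Hessian plus the correction $\int_0^\infty \rho_\theta\big((\partial_{\theta_i}\nabla_x U,\nabla_x P_s^\theta \partial_{\theta_j}U)\big)\,\md s$, and bounding the integrand through Lemma \ref{lemma:fastsemigroupderests}; you obtain the same inequality \emph{statically}, via the covariance identity and Brascamp--Lieb for the log-concave conditional $\mu^\theta$. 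These are two faces of the same estimate (the semigroup integral is exactly the dynamic representation of that covariance), and both are calibrated against the $d_\theta D^2/\kappa$ headroom in Assumption \ref{ass:avgderest}. One caveat: under the paper's convention $|\nabla_x\partial_{\theta_i}U|^2\leq D$, your Brascamp--Lieb bound comes out as roughly $2d_\theta D/\kappa$, which does not literally match the $d_\theta D^2/\kappa$ allowed for in \eqref{eqn:slowstrongconvex1}; however the paper's own Cauchy--Schwarz step has the same looseness (it multiplies squared norms where norms are needed), so this is a shared calibration quibble, not a gap in your argument. You are also right that boundedness of $\nabla^2_\theta \bar b$ (your second-variation source term) cannot come from part 2 of Assumption \ref{ass:growthcoeffs} alone and requires controlling the $\theta$-derivatives of $\mu^\theta$ through the fast-semigroup estimates of Lemma \ref{lemma:fastsemigroupderests}; this mirrors the paper's verification of \eqref{eqn:suffCond2} via the computation leading to Eq.~(174) of \cite{crisan2022poisson}. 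What your route buys is self-containedness and a transparent reason for $\beta$-uniformity (the variation equations are noiseless pathwise ODEs, and $\bar b$ does not depend on $\beta$); what the paper's buys is brevity and consistency of framework, since Proposition \ref{prop:averaging} needs these estimates packaged precisely in the form of the hypotheses of \cite{crisan2022poisson} and \cite{angeli2023uniform}.
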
 
\begin{proof}
    The proof of Lemma~\ref{lem:avgderest} can be found in the Supplementary Material \cite{suppMat}.
\end{proof}

\begin{rem}\label{rem:avgstronglyconvex}
	In proving Lemma \ref{lem:avgderest}, we will prove that under Assumption \ref{ass:growthcoeffs}-\ref{ass:avgderest}, the drift coefficient of $\bar \theta^{ \beta}_t$ from \eqref{eq:theta_bar_sde} is strongly convex. See the proof of Lemma \ref{lem:avgderest} in the Supplementary Material \cite{suppMat}.
\end{rem}

\begin{note}\label{note:polygrowth}
	Assumption \ref{ass:growthcoeffs} may be relaxed to allow for unbounded second and third derivative, i.e. the following: there exist $C,m,m'>0$ such that $$|\nabla^2 U(\theta,x)| + |\nabla^3 U(\theta,x)| \leq C(1+|x|^m+|\theta|^{m'}),$$
	but there must exist $\alpha >0$ such that  $|\partial_{x_i}\partial_{x_j}\partial_{x_k}U(\theta,x) | \leq \alpha $.
	In this case, \eqref{eqn:slowstrongconvex1} becomes
	\begin{equation}
	\begin{aligned}
	&- \sum_{i,j=1}^n \partial_{\theta_i} \partial_{\theta_j} U(\theta,x) \xi_i\xi_j \leq -\left(\frac{d_\theta D^2}{\kappa}+\zeta_0(1+|\theta|^{m'})\right) \lvert\xi\rvert^2,
	\end{aligned}
	\end{equation}
	and the final result \eqref{eqn:uitAveraging} becomes

 \begin{equation}
	\left \vert
	\mathbb E f(\theta_t^{\epsilon,\beta,  \theta, x}) - \mathbb E f(\bar{ \theta}^{\beta,\theta}_t)\right \vert \leq \epsilon C\left(\left\|\nabla f\right\|_\infty^2 +  \left\|\nabla^2 f\right\|_\infty^2\right)(1+|x|^{3m}+|\theta|^{4m'}).  
	\end{equation}
	The complication here is that one needs \eqref{eqn:derestaveraging} and \eqref{eqn:secondDerEstaveraging} to hold for $g$ with polynomial derivatives, rather than bounded derivatives (in particular, one needs them to hold for $g$ with the same growth as $|\nabla^2 U(\theta,x)|$ and $|\nabla^3 U(\theta,x)|$). This is not a problem, however, since one can obtain these in the polynomial setting via the arguments in the proof of \cite[Proposition 4.5]{crisan2022poisson}. {The calculation is the same, but significantly lengthier. We present the bounded second and third derivative case here for ease of notation and exposition.}
\end{note}



\section{Numerical examples}\label{sec:numex}

In the following, we compare the SFLA scheme (see Algorithm \ref{alg:ais}) to existing methods, namely to SOUL (see Section~\ref{sec:comparison_SOUL}, also \cite{de2021efficient}) and PGD (see \cite{kuntz2023particle}). More specifically, we implement Algorithm \ref{alg:ais} initialised at $\left(\theta^{\epsilon, \beta}_{0}, X^{\epsilon, \beta}_{0}\right) = (\theta, x)$.

We compare Algorithm \ref{alg:ais} with SOUL and PGD for two different reasons: SOUL because we believe it is comparable in concept to Algorithm \ref{alg:ais}, see Note \ref{note:cont}, and PGD as an example of another recently proposed algorithm which addresses the MMLE problem. We implement SOUL with no burn-in period (i.e. $\tilde{M}=M$ in \eqref{eqn:soul_latent_update}-\eqref{eqn:soul_theta_update}), and an initialisation of $\left( \theta_0, X^{(0)}_{0}\right) = (\theta, x)$. Similarly, our implementation of PGD has no burn in and is initialised at $X^{i,N}_0 = x$ for all $1\leq i \leq N$ and $\theta^N_0 = \theta$.

For ease of comparison between the algorithms, we introduce two parameters, $N$ and $\gamma$, which we vary. We use stepsize $\gamma$ with $N$ particles for our implementation of PGD. For our implementation of SOUL, we set $ \delta = \gamma$, and $\tilde{M}=M=N$. For our implementation of SFLA (see Algorithm \ref{alg:ais}), we set $\epsilon = 1/N$ and use a stepsize of $\delta = \gamma/N$. Before setting up the main example for which we run the experiments, we make the following note.
 \begin{rem}\label{note:cont} Two remarks are in order:
\begin{itemize}
\item We point out the similarity between Algorithm \ref{alg:ais} and diffusion based MMLE schemes such as SOUL (see \eqref{eqn:soul_latent_update}-\eqref{eqn:soul_theta_update}). Substantially, SOUL runs the FrP for a number of iterations (so, in averaging terms, can be seen as converging for each $k$ fixed, as $M \rightarrow\infty$ to the invariant measure $p_{\theta_{k-1}}(x|y)$ of the FrP), then averages over those iterations to update the SP, and repeats. It can be thought of as embodying the use of two different stepsizes for the parameter and latent variable dynamics, and is similar, at least in spirit, to the multiscale system from Algorithm \ref{alg:ais}. More on this comparison can be seen in Section \ref{sec:comparison_SOUL}.
\item Algorithm \ref{alg:ais} is an Euler discretisation of \eqref{eq:theta_update}-\eqref{latent_update}. Note that \eqref{eq:theta_update}-\eqref{latent_update} can be viewed as an SDE with stiff coefficients; it is well known \cite[Chapter 9]{weinan2011principles} that for such SDEs using an Euler discretisation is certainly not an optimal strategy, and indeed there are many methods including splitting and implicit schemes for discretising multiscale SDEs, for a review of these methods see \cite{weinan2011principles} and references therein. Here we started with implementing Euler just for simplicity. Further work is ongoing to produce similar theoretical results for the discretisation of multiscale systems.
\end{itemize}
 \end{rem}
 
\subsection{Bayesian logistic regression}\label{sec:blrExample}
For our numerical experiments, we consider the Bayesian logistic regression set-up described in \cite[Section 4.1]{de2021efficient}. We do not outline the entire setup here, but write the relevant potential, namely: 

\begin{align}
U(\theta, x)&= \frac{d_x}{2}\log(2\pi \sigma^2) + \sum^{d_y}_{i=1}\left( y_i \log(s(v_i^T x)) +(1-y_i)\log((s(-v_i^Tx)))\right) - \frac{|x-\theta \mathbf{1}_{d_x}|^2}{2\sigma^2},
\end{align}
with all the relevant notation in \cite[Section 4.1]{de2021efficient}.

\begin{figure}[t]
\begin{center}
\includegraphics[width=1\textwidth]{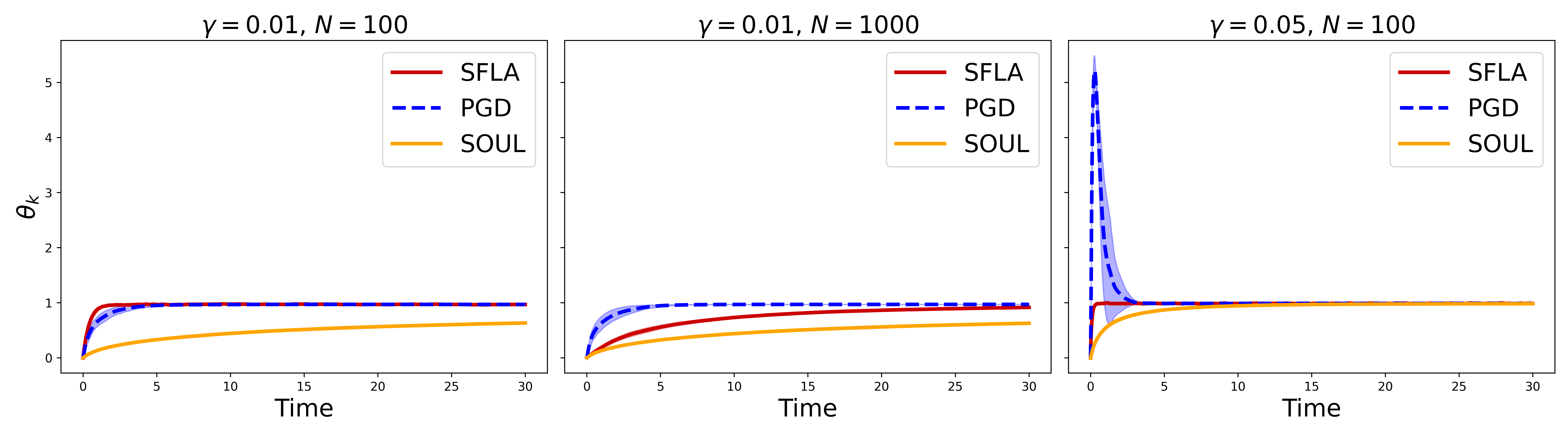}
 \caption{The performance of the algorithms PGD, SOUL and our multiscale \emph{slow fast} method (SFLA) can be seen below. This is in the context of Section \ref{sec:blrExample}. We plot the mean chain $\theta_t$ for each algorithm ($10$ runs were done, with $\pm 1$ standard deviation also plotted), with differing $\gamma$ and $N$. Each algorithm was run for 30 seconds, with initialisation $(\theta,x) = (0,0)$, and $\beta = 10^{4}$. The instability of PGD for $\gamma = 0.05$ and $N=100$ may be surprising but this is due to the lack of a burn-in period.}
 \label{fig:blrTimes}
 \end{center}
\end{figure}

We now numerically compare the method outlined in Algorithm \ref{alg:ais} to both SOUL (see \cite{de2021efficient}) and PGD (see \cite{kuntz2023particle}). All three algorithms we use here can be seen as approximations of \eqref{eq:theta_bar_sde}. We point out that these are very preliminary results, which do not by themselves justify the use of the discretisation in Algorithm \ref{alg:ais} over alternatives such as SOUL and PGD. For this, more thorough experiments in a wider variety of contexts (for example with non-convex potential $U$, or experiments that utilise PGD's parallelisation capabilities) would be needed. Instead, we use this section to simply compare Algorithm \ref{alg:ais} to existing methods which address the MMLE problem, in light of the observations made in Note \ref{note:cont} about the similarity between the three that we present here.

As mentioned at the start of Section \ref{sec:numex}, for our implementation of Algorithm \ref{alg:ais} we pick $\epsilon = 1/N$. This is so that all algorithms have the same computational complexity, that is, they all make $\mathcal{O}(KN)$ gradient computations, where $K$ is the number of iterations. Though this is the case, PGD allows for mitigating the factor of $N$ by vectorising over the particle cloud; SOUL and our slow-fast algorithm do not, since time cannot be parallelised over. Hence, we plot the parameter estimate against the time run, in Figure \ref{fig:blrTimes}, as well as the computations, in Figure \ref{fig:blrComputation}. We can see in Figure \ref{fig:blrTimes} that PGD can converge with a smaller stepsize than SOUL and our slow fast algorithm. However, SOUL and the slow fast algorithm seem slightly more stable than PGD under larger stepsize. When one plots the parameter estimate versus gradient computation, as in Figure \ref{fig:blrComputation}, one can see the similarity between SOUL and Algorithm \ref{alg:ais}. This is precisely because they are both approximating \eqref{eq:theta_bar_sde} from a time perspective, rather than the particle perspective. This being said, PGD also looks to perform very similarly when plotted against computations.

\begin{figure}[t]
\begin{center}
\includegraphics[width=1\textwidth]{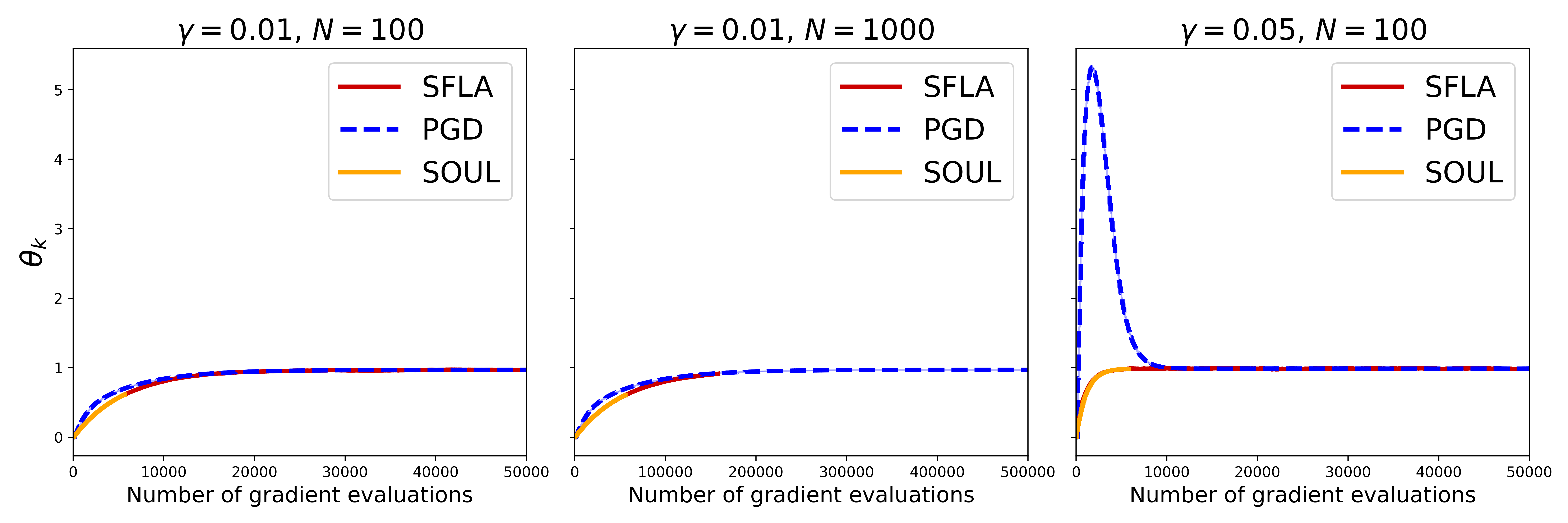}
 \caption{This is the same data as that of Figure \ref{fig:blrTimes}, except we plot mean $\theta_k$ (again, $10$ runs were done, with $\pm 1$ standard deviation also plotted) against the number of computations (which we define simply as the number of times a gradient is evaluated), rather than the run time. This is useful to see how the ability for PGD to be parallelised is more beneficial for larger $N$, as one would expect. Again, here the initialisation is $(\theta,x) = (0,0)$, $\beta = 10^{4}$ and each algorithm was run for 30 seconds; hence the early stopping of some lines. We also see that the vast majority of variability across runs was in the time taken (i.e. is seen in Figure \ref{fig:blrTimes}), rather than the value of $\theta_k$ after a certain number of gradient evaluations.}
\label{fig:blrComputation}
\end{center}
\end{figure}
\subsection{Bayesian neural network}
We now consider the setting of \cite{Yao2020StackingFN} to classify images using the MNIST dataset, denoted hereafter as $\mathcal{Y}$. Another implementation of this example can be seen in \cite[Section 3.2 and Section E.3]{kuntz2023particle}. We subsample $1000$ datapoints of digits labelled as $4$ and $9$ and split the resulting dataset into training and test sets, $\mathcal{Y}_{train}$ and $\mathcal{Y}_{test}$ with proportion $80\%$ and $20\%$ respectively.
\begin{figure}[t]
\centering
\includegraphics[width=\textwidth]{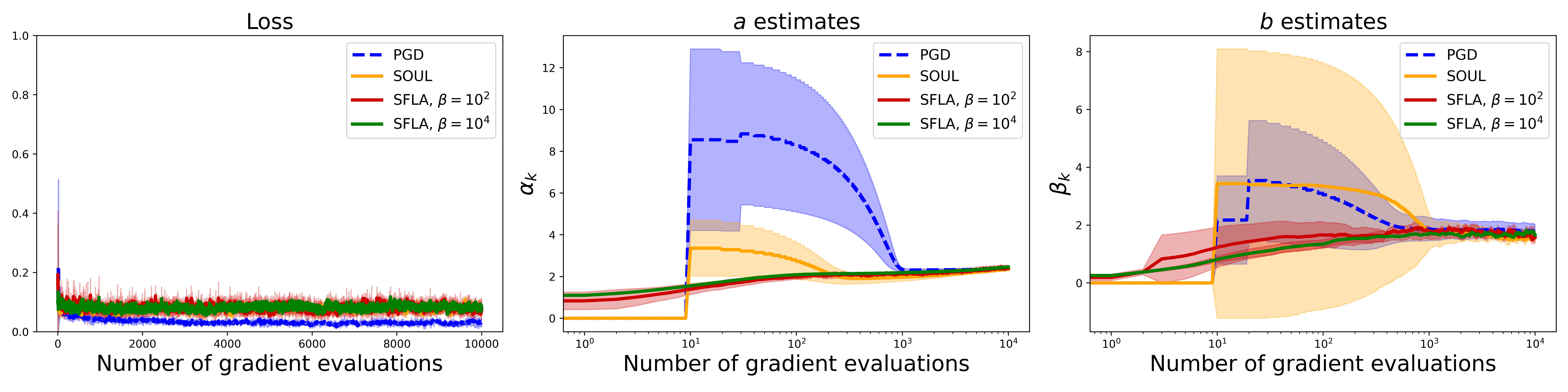}
\caption{Bayesian neural network example. This was produced with $\gamma=0.1$ and $N=10$, with $K=1000$ steps for the PGD and SOUL algorithms. We plot against computations rather than iterations in the same way as in Figure \ref{fig:blrComputation}, so that each algorithm can be compared fairly on the same axis. Here we ran each algorithm $5$ times and plot $\pm 1$ standard deviation along with the mean. For SFLA, we plot the results for two different $\beta$ values: $10^2$ and $10^4$.}
\label{fig:bnn}
\end{figure}

In the following we use the same notation as \cite[Section 3.2 and Section E.3]{kuntz2023particle}, except using $a$ and $b$ instead of $\alpha$ and $\beta$. We gather the likelihood of the model here for the reader's convenience. A two-layer neural network with tanh activation functions and softmax output layer is used. The input layer has 40 nodes and 784 inputs, and the output layer has two nodes. The weights of the neural network ($x\coloneqq (w,v) \in \R^{D_w} \times \R^{D_v}$ with $D_w = 40\times 784$ and $D_v = 2\times 40$) are the latent variables, and a zero-mean Gaussian prior is put on the weights with variances $e^{2a}$ and $e^{2b}$ respectively. Hence, the likelihood of the labels $l$ given features $f$ and network weights $x$ is given by
\begin{equation}\label{eqn:likelihoodBNN}
    p(l|f,x) \propto \text{exp} \left( \sum^{40}_{j=1} v_{lj} \tanh \left(\sum^{784}_{i=1} w_{ji}f_i\right) \right).
\end{equation}
 Our slow variable here is $\theta \coloneqq (a, b)$, and by \eqref{eqn:likelihoodBNN} we have that the model density is
\begin{equation}
    p_\theta (x|\mathcal{Y}_{train}) = \mathcal{N}(w;\mathbf{0}_{D_w}, e^{2a}I_{D_w})\mathcal{N}(v;\mathbf{0}_{D_v}, e^{2b}I_{D_v}) \prod_{(f,l) \in \mathcal{Y}_{train}} p(l |f,x),
\end{equation}
where $\mathbf{0}_{D_w}$ is the zero vector of dimension $D_w$ and $I_{D_w}$ is the identity matrix of size $D_w$.
The numerics we perform for this example are similar to Section \ref{sec:blrExample}. We implement Algorithm \ref{alg:ais}, SOUL (see \eqref{eqn:soul_latent_update}-\eqref{eqn:soul_theta_update}) and PGD (see \eqref{eq:ContIPS_theta_kuntz}-\eqref{eq:ContIPS_x_kuntz}). We use stepsize $\gamma = 0.1$ with $10$ particles for our implementation of PGD. For our implementation of SOUL, we set $ \delta = \gamma = 0.1$, and $\tilde{M}=M=N =10$. For our implementation of SFLA (see Algorithm \ref{alg:ais}), we set $\epsilon = 1/10$ and use a stepsize of $\delta = \gamma/N = 0.01$.

We note that this example does not satisfy the conditions needed for Theorem \ref{thm:main_thm}. In particular, the corresponding $U(\theta, x) \coloneqq -\log  p_\theta (x|\mathcal{Y}_{train})$ is well known to be, in most cases, multimodal. Hence, we include it simply as indicative of possible generalisations to Theorem \ref{thm:main_thm}. The results are shown in Figure \ref{fig:bnn}, with both the classification error and each component of the parameter update plotted for each iteration. We see similar behaviour between SOUL and the slow fast algorithm, with slightly better but still comparable performance from PGD. {This example also gives some evidence that the tuning of $\beta$ may be important in the case where the potential $U$ is not convex. In Figure \ref{fig:bnn} for the estimate of $b$, one can see that although the estimates with $\beta = 10^2$ have higher variance than that of $\beta = 10^4$, the runs with $\beta = 10^2$ seem to indicate quicker convergence. This is, however, not definitive and simply indicates that $\beta$ plays a role beyond simple concentration around the maximiser. We also note that both SOUL and SFLA exhibit higher variance in the \emph{left} of Figure \ref{fig:bnn}. This corresponds to the loss of the Bayesian neural network. Hence, it is based on the weights which are governed by the fast dynamics. The \emph{middle} and \emph{right} hand graphs are of the slow variable, and the higher variance in the loss does not seem to translate to higher variance of the parameters $a$ and $b$.}

\section{Conclusions}
In this paper, we have introduced slow-fast SDEs which enabled us to see various parameter estimation methods from a multiscale perspective. Using the multiscale perspective, we also investigated a two time-scale system, termed Slow-Fast Langevin Algorithm (SFLA), for parameter estimation in statistical models. Leveraging recent results in the stochastic averaging literature, we provided a detailed theoretical analysis and discussion as well as numerical comparisons to relevant algorithms from computational statistics literature.

Our perspective opens up many possibilities for merging results from applied probability (multiscale analysis) and computational statistics, as many prominent methods in computational statistics (like the EM algorithm) use a combination of sampling and optimisation methods which can be suitably rewritten as multiscale systems. We expect that the perspective we provide here will be the foundation of many such theoretical results as well as practical new developments motivated by the multiscale perspective.

\section*{Acknowledgements}
The authors would like to thank the Isaac Newton Institute for Mathematical Sciences, Cambridge, for support and hospitality during the programme \textit{The mathematical and statistical foundation of future data-driven engineering}, where work on this paper was undertaken. This work was supported by EPSRC grant EP/R014604/1. I. Souttar was partially supported by a MAC-MIGS CDT Scholarship under EPSRC grant EP/S023291/1. Michela Ottobre gratefully acknowledges support from the EPSRC grant EP/Z534225/1 . 

\bibliography{slow_fast_arxivsubmission_abbrv.bib}

\appendix

\section{Lemmas}
\subsection{Prekopa-Leindler inequality for $\mu$-strongly log-concave densities}\label{app:PLI}
The following lemma is well-known, see, e.g., \cite[Theorem~7.1]{gardner2002brunn}, we state it here for completeness.
\begin{lem}\label{lem:PLI} (Prekopa-Leindler Inequality) Let $h(x, y)$ be a jointly-log-concave density (i.e. $- \nabla \log h(z)$ satisfies Assumption~\ref{assmp:convexity} with $\mu = 0$ with $z = (x, y)$) which implies
\begin{align*}
h((1-\lambda) z + \lambda z') \geq h(z)^{1-\lambda} h(z')^\lambda.
\end{align*}
Then, $\tilde{h}(y) = \int h(x, y) \mathrm{d}x$ is also log-concave, i.e.,
\begin{align*}
\tilde{h}((1-\lambda) y + \lambda y') \geq \tilde{h}(y)^{1-\lambda} \tilde{h}(y')^{\lambda}.
\end{align*}
\end{lem}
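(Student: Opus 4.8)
The plan is to deduce this marginalisation statement directly from the integral (functional) form of the Prékopa–Leindler inequality, which I shall invoke as the main tool: for $\lambda \in (0,1)$ and measurable functions $f,g,m:\mathbb{R}^{d_x}\to[0,\infty)$ satisfying $m((1-\lambda)u+\lambda v)\geq f(u)^{1-\lambda}g(v)^{\lambda}$ for all $u,v$, one has $\int m \geq \left(\int f\right)^{1-\lambda}\left(\int g\right)^{\lambda}$. Granting this inequality, the lemma reduces to a short slicing argument in the $x$-variable.

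First I would fix $y,y'$ and $\lambda\in(0,1)$, write $y_\lambda=(1-\lambda)y+\lambda y'$, and define the three $x$-slices $f(x)=h(x,y)$, $g(x)=h(x,y')$ and $m(x)=h(x,y_\lambda)$. With these choices $\int f = \tilde{h}(y)$, $\int g = \tilde{h}(y')$ and $\int m = \tilde{h}(y_\lambda)$, so the desired inequality $\tilde{h}(y_\lambda)\geq \tilde{h}(y)^{1-\lambda}\tilde{h}(y')^{\lambda}$ is exactly the conclusion of Prékopa–Leindler applied to the triple $(f,g,m)$.

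It then remains only to verify the pointwise hypothesis. For arbitrary $u,v\in\mathbb{R}^{d_x}$ I would use the elementary identity
$$\big((1-\lambda)u+\lambda v,\; y_\lambda\big)=(1-\lambda)(u,y)+\lambda(v,y')$$
in $\mathbb{R}^{d_x}\times\mathbb{R}^{d_y}$, and then invoke the assumed joint log-concavity of $h$ at the points $z=(u,y)$ and $z'=(v,y')$ to obtain
$$m\big((1-\lambda)u+\lambda v\big)=h\big((1-\lambda)(u,y)+\lambda(v,y')\big)\geq h(u,y)^{1-\lambda}h(v,y')^{\lambda}=f(u)^{1-\lambda}g(v)^{\lambda}.$$
This is precisely the Prékopa–Leindler hypothesis, so applying the inequality finishes the proof.

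I do not expect a serious obstacle, since essentially all of the content is absorbed into the Prékopa–Leindler inequality, which I treat as a cited black box (it is the substantive step, classically proved by induction on dimension, the one-dimensional base case being a Brunn–Minkowski-type rearrangement argument). The only points requiring mild care are the measurability of the slices and the degenerate cases where $\tilde{h}(y)$ or $\tilde{h}(y')$ vanishes or fails to be finite; these are handled by the usual conventions $0^{\lambda}=0$ together with a standard truncation or monotone-convergence argument if integrability needs to be justified.
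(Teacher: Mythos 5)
Your proposal is correct. Note that the paper itself does not prove this lemma: it is stated ``for completeness'' and delegated entirely to the citation \cite[Theorem~7.1]{gardner2002brunn}. Your argument supplies the standard derivation that the citation covers: fixing $y,y'$, slicing $h$ in the $x$-variable to obtain $f(x)=h(x,y)$, $g(x)=h(x,y')$, $m(x)=h(x,(1-\lambda)y+\lambda y')$, checking the pointwise hypothesis $m((1-\lambda)u+\lambda v)\geq f(u)^{1-\lambda}g(v)^{\lambda}$ via the joint log-concavity of $h$ and the identity $((1-\lambda)u+\lambda v,\,(1-\lambda)y+\lambda y')=(1-\lambda)(u,y)+\lambda(v,y')$, and then invoking the functional Pr\'ekopa--Leindler inequality in $\mathbb{R}^{d_x}$. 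This is exactly how Pr\'ekopa's marginalisation theorem is classically deduced from the functional inequality, so your proof and the paper's cited reference rest on the same underlying result; yours simply makes the reduction explicit, which is a genuine (if modest) addition since the lemma as stated in the paper is the marginalisation consequence rather than the functional inequality itself. Your closing remarks on measurability and the degenerate cases $\tilde{h}(y)\in\{0,\infty\}$ are the right caveats and are handled exactly as you say.
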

The idea can be extended to $\mu$-strongly log-concave densities. Consider a $\mu$-strongly log-concave function $h(x,y)$, i.e. $h(x, y) \propto e^{-V(x, y)}$ with $V(x, y)$ $\mu$-strongly convex. Let $z = (x, y)$. Strong convexity of $V$ implies that 
\begin{align*}
\left\langle z - z', \nabla V(z) - \nabla V(z') \right\rangle \geq \mu |z - z'|^2
\end{align*}
which is our Assumption~\ref{assmp:convexity} written in a generic notation. We note that this assumption implies (see \cite[Lemma~2]{zhou2018fenchel}) that $V(x, y) - (\mu/2) |y|^2 - (\mu/2) |x|^2$ is convex, which implies that $V(x, y) - (\mu/2) |y|^2$ is convex as it is a sum of two convex functions.

Now consider the log-concave density $h(x,y) \propto e^{-V(x,y) + (\mu/2) |y|^2}$. By Lemma~\ref{lem:PLI}, we can readily see that
\begin{align*}
\tilde{h}(y) = \int h(x, y) \mathrm{d}x \propto e^{(\mu/2)|y|^2}\int e^{-V(x,y)} \mathrm{d}x
\end{align*}
is log-concave. This implies that the function $y \mapsto \int e^{-V(x,y)} \mathrm{d} x$ is $\mu$-strongly log-concave.

\section{Proofs}

\subsection{Proof of Lemma~\ref{lem:strongmonotonicity}}
	With a straightforward calculation Assumption \ref{ass:strongconvexfast} implies the following
	\begin{equation}
	\langle\nabla_x U(\theta,x) - \nabla_x U(\theta,x'), x-x'\rangle \geq \frac{\kappa}{2} |x-x'|^2, \quad \text{for all } x,x' \in \R^{d_x} ,\, \theta \in \R^{d_\theta}.
	\end{equation}
	Hence, taking $x' = 0$,
	\begin{equation}
	\langle\nabla_x U(\theta,x), x \rangle \geq \frac{\kappa}{2} |x|^2 + \langle\nabla_x U(\theta, 0), x\rangle,
	\end{equation}
	so that by multiplying both sides by $-1$ and Assumption \ref{ass:growthcoeffs} we have
	\begin{equation}
	-\langle\nabla_x U(\theta,x), x\rangle \leq - \frac{\kappa}{2} |x|^2 - \langle\nabla_x U(\theta, 0), x\rangle\leq - \frac{\kappa}{2} |x|^2 + C|x| \leq -\frac{\kappa}{4} |x|^2 + \frac{4C^2}{\kappa}
	\end{equation}
	and \eqref{eqn:stronglymonotonicx} is shown with constants independent of $x$ and $\theta$.
	Equation \eqref{eqn:stronglymonotonic2} can be proven with the exact same procedure.

\subsection{Proof of Proposition \ref{thm3.3}}
{We first prove Proposition \ref{thm3.3} \ref{propfirstversion}. This is simply a case of verifying the assumptions of \cite[Theorem 3.3]{crisan2022poisson} hold. Assumption A1 from \cite{crisan2022poisson} holds by \ref{assgenthm1}, \ref{assgenthm2} and \ref{assgenthm3}. Assumption A2 from \cite{crisan2022poisson} holds by \ref{assgenthm3}. Assumption A3 and Assumption A4 from \cite{crisan2022poisson} are satisfied by \ref{assgenthm4}. Finally, \cite[eq (45)]{crisan2022poisson} and \cite[eq (46)]{crisan2022poisson} hold by \ref{assgenthm5}.}

{
		To show Proposition \ref{thm3.3} \ref{propsecondversion}, we must again show that the assumptions of \cite[Theorem 3.3]{crisan2022poisson} hold, with some simplifications due to constant diffusion.
	Assumption A1 from \cite{crisan2022poisson} is implied by \ref{assgenthm1_simple}, \ref{assgenthm2_simple}, and the fact we have constant diffusion. We note that we require less regularity on the drift coefficients here because of the constant diffusion -- this will be addressed below.
		Eq (29) and eq (30) from \cite[Assumption A2]{crisan2022poisson} hold since we have constant diffusion.
		Assumption A3 and Assumption A4 from \cite{crisan2022poisson} are true by \ref{assgenthm4}.
	Eq (46) of \cite[Theorem 3.3]{crisan2022poisson} holds because of \eqref{eqn:2dersimple}, so it is sufficient to show that since we have constant diffusion in both variables, we only need semigroup derivative estimates up to order two for the frozen semigroup (i.e. \cite[eq (59)]{crisan2022poisson} for $k=2$). To see this, we point to  \cite[eq (59)]{crisan2022poisson}. In the constant diffusion case here,
\begin{equation}\label{eqn:derGensimple}\frac{\partial\mathcal{L}^\theta}{\partial \theta_i}\phi^\theta(x) = (\partial_{\theta_i}a(\theta,x), \nabla_x \phi^\theta(x))\end{equation}
	is a first order differential operator, whereas in  \cite[eq (59)]{crisan2022poisson} it is a second order differential operator. This halves the amount of frozen semigroup derivatives eventually needed from four to two meaning that \eqref{eqn:4dersimple} in \ref{assgenthm5_simple} is enough. This also means that instead of 4 orders of regularity on the drift coefficients, we only require 2. Moreover, the extra H\"{o}lder continuity is not required, by the same mollification argument as in \cite[Theorem 7.1]{crisan2022poisson}. Hence, the proof is done.
}

\subsection{Proof of Proposition \ref{prop:averaging}}
	
{We must show that the assumptions of Proposition \ref{thm3.3} \ref{propsecondversion} hold. We note that we have constant diffusion, $b(\theta, x)= -\nabla_{\theta} U(\theta, x)$ and $a(\theta, x)= -\nabla_{x} U(\theta, x)$. Assumptions \ref{assgenthm1_simple} and \ref{assgenthm2_simple} hold by Assumption \ref{ass:growthcoeffs} and since $U \in C^3(\R^{d_\theta} \times \R^{d_x})$ with bounded second and third derivative. Assumption \ref{assgenthm4} holds by Lemma \ref{lem:strongmonotonicity}. Assumption \ref{assgenthm5_simple} holds by Lemma \ref{lemma:fastsemigroupderests} and Lemma \ref{lem:avgderest}.
One might expect the bound \eqref{eqn:uitAveraging} not to be uniform in $\beta$ since the ellipticity of \eqref{eq:theta_update} is not uniform in $\beta$, but the ellipticity of the SP is only needed in \cite{crisan2022poisson} to produce the averaged semigroup derivative estimates (i.e for \cite[Proposition 7.2]{crisan2022poisson}). We obtain these otherwise, see the proof of Lemma \ref{lem:avgderest}, and so do not need this ellipticity to be uniform in $\beta$ to produce a bound independent of $\beta$. Finally, the exponents on the initial conditions in \eqref{eqn:uitAveraging} are lower than those in \cite[Theorem 3.3]{crisan2022poisson} because our drift coefficients have bounded derivative (hence Eq. (123) and Eq. (124) in \cite{crisan2022poisson} are bounded).}

\subsection{Proof of Lemma \ref{lemma:fastsemigroupderests}}
	We prove \eqref{eqn:derestaveraging} and \eqref{eqn:secondDerEstaveraging} using \cite[Lemma 2.8]{angeli2023uniform} and \cite[Lemma 2.10]{angeli2023uniform} respectively. \cite[Lemma 2.10]{angeli2023uniform} would give us the two derivative estimates we require straight away, but because the estimate \eqref{eqn:derestaveraging} has an effect on the constants in \eqref{eqn:slowstrongconvex1}, we must be more careful in ensuring the constants in \eqref{eqn:derestaveraging} are sharp. Hence we use  \cite[Lemma 2.8]{angeli2023uniform} for the proof of \eqref{eqn:derestaveraging}. Assumption 2.6 (17a) from \cite{angeli2023uniform} holds by Assumption \ref{ass:strongconvexfast} above, in particular with $2\gamma = \kappa$. Assumption 2.6 (17b) holds since we have constant diffusion. Hence by \cite[Lemma 2.8]{angeli2023uniform}, \eqref{eqn:derestaveraging} holds.

	Now we show \eqref{eqn:secondDerEstaveraging}, for which we must verify Assumption 2.9 from \cite{angeli2023uniform}. We have already verified (17a) from \cite{angeli2023uniform}. Equation (18a) from \cite{angeli2023uniform} holds by Assumption \ref{ass:growthcoeffs} (\ref{item:bdd3rdDeriv}), and (18b) and (18c) hold because we have constant diffusion. Hence Assumption 2.9 from \cite{angeli2023uniform} is verified and \eqref{eqn:secondDerEstaveraging} holds.

\subsection{Proof of Lemma \ref{lem:avgderest}}
	
	We adapt Lemma 2.10 from \cite{angeli2023uniform} to obtain our required estimates. It is enough to show that the drift coefficient of $\bar \theta^{ \beta}_t$ from \eqref{eq:theta_bar_sde} is strongly convex. That is, there exists $\tilde{\lambda}(\theta)$ such that for all $\xi$,
	\begin{equation}\label{eqn:suffCond1}
	- \sum_{i,j=1}^{d_\theta} \partial_{\theta_i} \left(\int dx \, 
	\partial_{\theta_j} U(\theta,x) \rho_{\theta}(x)\right)\xi_i\xi_j \leq -\tilde{\lambda}(\theta) |\xi|^2,
	\end{equation}
	and this $\tilde{\lambda}(\theta)$ should also satisfy that there exists $\zeta>0$ such that
	\begin{equation}\label{eqn:suffCond2}
	\sum_{i,j,k=1}^{d_\theta} \left|\partial_{\theta_i}\partial_{\theta_j} \left(\int dx \, 
	\partial_{\theta_k} U(\theta,x) \rho_{\theta}(x)\right)\right| \leq \zeta (1+\tilde{\lambda}(\theta)).
	\end{equation}
	These are conditions on the averaged drift, and in particular include the invariant measure of our FrP \eqref{eqn:frozen}. We wish to provide assumptions directly on the function $U$. To this end, we follow the proof of \cite[Proposition 7.2]{crisan2022poisson}, with some simplifications due to our assumptions. To streamline notation, in the following we use $\rho_\theta(f) \coloneqq \int dx \, 
	f(x,\theta) \rho_{\theta}(x)$, and drop the dependencies on $x,\theta$. By \eqref{eqn:derGensimple} and the fact that in our case $a = -\nabla_x U$ we have
    	 \begin{equation}\label{eqn:derGen}\frac{\partial\mathcal{L}^\theta}{\partial \theta_i}\phi^\theta(x) = (\partial_{\theta_i}\nabla_x U(\theta,x), \nabla_x \phi^\theta(x)).\end{equation}
    By using the representation formula (63) in \cite{crisan2022poisson} and \eqref{eqn:derGen} above,
	\begin{align}
	\sum_{i,j=1}^{d_\theta} &\partial_{\theta_i} \left(\rho_{\theta}\left(
	\partial_{\theta_j} U\right)\right)\xi_i\xi_j \\ &=  \sum_{i,j=1}^{d_\theta}  \rho_{\theta}\left(
	\partial_{\theta_i} \partial_{\theta_j} U\right)\xi_i\xi_j 
	+\sum_{i,j=1}^{d_\theta} \int_0^\infty \rho_{\theta}\left(
	(\partial_{\theta_i}\nabla_x U(\theta,x), \nabla_x P_s^{\theta} \partial_{\theta_j}U)\right)\xi_i\xi_j
	\end{align}
	
	By \eqref{eqn:derestaveraging} and Assumption \ref{ass:growthcoeffs}, we have
	\begin{equation}
	\begin{split}
	|(\partial_{\theta_i}\nabla_x U, \nabla_x P_s^{\theta} \partial_{\theta_j}U)| &\leq \left(\sum_{k=1}^{d_x} |\partial_{\theta_i}\partial_{x_k} U |^2 \right)\left(\sum_{k=1}^{d_x} | \partial_{x_k} P_s^{\theta} \partial_{\theta_j}U|^2\right) \\
	&\leq D^2 e^{-\kappa t} .
	\end{split}
	\end{equation}
	Hence we have
	
	\begin{equation}
	\left|\sum_{i,j=1}^{d_\theta} \int_0^\infty ds \rho_{\theta}\left(
	(\partial_{\theta_i}\nabla_x U, \nabla_x P_s^{\theta} \partial_{\theta_j}U)\right)\xi_i\xi_j\right| \leq \frac{d_\theta  D^2}{\kappa}|\xi|^2.
	\end{equation}
	
	Following the calculation in the proof of \cite[Proposition 7.2]{crisan2022poisson} that leads to Equation (174) in that proof, with the simplification in our case \eqref{eqn:derGen}, we obtain \begin{equation}
	\left|\partial_{\theta_i}\partial_{\theta_j} \bar \rho_\theta(\partial_{\theta_k}U)\right| \leq C 
	\end{equation}
	for some $C >0$.

	Eq \eqref{eqn:slowstrongconvex1} above means we can take \begin{equation}\label{eqn:strongconvexConstant}
	\tilde{\lambda}(\theta) = \zeta_0,
	\end{equation}
	so that \eqref{eqn:suffCond1} and \eqref{eqn:suffCond2} hold.
	We have verified (16a) of \cite[Assumption 2.9]{angeli2023uniform}. Since our diffusion is constant we have then verified the whole of \cite[Assumption 2.9]{angeli2023uniform}, and in particular everything is done independently of $\beta$.

\subsection{Statement and proof of concentration bound}\label{subsec:concentration bound}
We now state Proposition \ref{prop:concentration bound} below, which provides a bound on the second term on the RHS of \eqref{triangular inequ}. 

\begin{prop}\label{prop:concentration bound} Let $U$ satisfy Assumption~\ref{assmp:convexity} and $\bar{\theta}_t$ denote the iterate of the process \eqref{eq:averaged_theta_bar}. Then, for every $t$ and every $f \in \mathcal{C}_b^2$, we have
    \begin{equation}\label{eqn:main concentration bouncd}
    \left| \mathbb{E} f(\bar{\theta}_t) - f(\theta^\star)\right|^2 \leq C_f^2\left(e^{-2\mu t} |\bar{\theta}_0 - \theta^\star |^2 +  \frac{2 d_\theta}{\beta} (1 - e^{-2\mu t})\right),
    \end{equation} 
    where $C_f$ denotes the Lipschitz constant of $f$.
\end{prop}
\begin{proof}
    Recall  the diffusion \eqref{eq:averaged_theta_bar}, namely
\begin{align}
    \mathrm{d} \bar{\theta}_t &= \nabla \log p_{\theta_t}(y) \mathrm{d} t + \sqrt{2 \beta^{-1}} \mathrm{d} B_t, \qquad \bar{\theta}_0= \theta_0.\label{eq:averaged_diffusion_1}
\end{align}
To prove our result, observe that $\rho(\theta, x) \propto e^{- U(\theta, x)}$ is $\mu$-strongly log concave as given in Assumption~\ref{assmp:convexity}. By Prekopa-Leindler inequality (See Appendix~\ref{app:PLI}), $\rho(\theta) \propto p_\theta(y) = \int e^{-U(\theta, x)} \mathrm{d}x$ is $\mu$-strongly log concave. Therefore \eqref{eq:averaged_diffusion_1} has a stationary measure $\bar{\rho}_\beta(\theta) \propto e^{\beta \log p_\theta(y)}$, in other words, $\bar{\rho}_\beta(\theta) \propto e^{-\beta G(\theta)}$ with $G(\theta) = - \log p_\theta(y)$

Since $G(\theta)$ is a strongly convex function, we have the relationship
\begin{align*}
\langle \theta - \theta', \nabla G(\theta) - \nabla G(\theta')\rangle \geq \mu |\theta - \theta'|^2,
\end{align*}
and the diffusion can be rewritten as
\begin{align*}
\mathrm{d} \bar{\theta}_t &= - \nabla G(\bar{\theta}_t) \mathrm{d} t + \sqrt{2 \beta^{-1}} \mathrm{d} B_t.
\end{align*}
From  Ito's formula, 
\begin{align*}
\bE |\bar{\theta}_t - \theta^\star |^2 = |\bar{\theta}_0 - \theta^\star|^2 - 2 \mathbb{E}\left[ \int_0^t \langle \bar{\theta}_s - \theta^\star, \nabla G(\bar{\theta}_s)\rangle \mathrm{d} s \right] + {2 d_\theta t}{\beta^{-1}}.
\end{align*}
Hence, using $\nabla G(\theta^\star) = 0$, we have
\begin{align*}
\bE |\bar{\theta}_t - \theta^\star |^2 = |\bar{\theta}_0 - \theta^\star |^2 - 2 \int_0^t \bE[\langle \bar{\theta}_s - \theta^\star, \nabla G(\bar{\theta}_s) - \nabla G(\theta^\star) \rangle] \mathrm{d} s + 2 d_\theta t \beta^{-1},
\end{align*}
Let us take the derivatives of both sides
\begin{align*}
\frac{\mathrm{d}}{\mathrm{d} t} \bE |\bar{\theta}_t - \theta^\star |^2 &= -2 \bE[\langle \bar{\theta}_t - \theta^\star, \nabla G(\bar{\theta}_t) - \nabla G(\theta^\star) \rangle]+ 2d_\theta \beta^{-1} \\
&\leq -2 \mu \mathbb{E} | \bar{\theta}_t - \theta^\star|^2 + 2d_\theta \beta^{-1}
\end{align*}
Given this, we now write
\begin{align*}
\frac{\mathrm{d}}{\mathrm{d} t} e^{2\mu t} \bE |\bar{\theta}_t - \theta^\star|^2 &= 2\mu e^{2\mu t} \bE |\bar{\theta}_t - \theta^\star |^2 + e^{2\mu t} \frac{\mathrm{d}}{\mathrm{d} t} \bE |\bar{\theta}_t - \theta^\star |^2 \\
&\leq 2\mu e^{2\mu t} \bE |\bar{\theta}_t - \theta^\star |^2 - 2\mu e^{2\mu t} \bE |\bar{\theta}_t - \theta^\star |^2 + e^{2\mu t} 2 d_\theta \beta^{-1} \\
&= e^{2\mu t} 2 d_\theta \beta^{-1}.
\end{align*}
Integrating both sides  from $0$ to $t$,
\begin{align*}
e^{2\mu t} \bE |\bar{\theta}_t - \theta^\star|^2 - |\bar{\theta}_0 - \theta^\star|^2 &\leq \frac{2 d_\theta}{\beta} \int_0^t e^{2\mu \tau} \mathrm{d} \tau, \\
&= \frac{2 d_\theta}{\beta} (e^{2\mu t} - 1),
\end{align*}
which implies 
\begin{align}
\bE |\bar{\theta}_t - \theta^\star|^2 \leq e^{-2\mu t} |\bar{\theta}_0 - \theta^\star|^2 +  \frac{2 d_\theta}{\beta} (1 - e^{-2\mu t}).
\end{align}
Now note that any $f\in C_b^2$ will also be Lipschitz, hence
\begin{align*}
\left| \mathbb{E} f(\bar{\theta}_t) - f(\theta^\star)\right| \leq C_f\bE | \bar{\theta}_t - \theta^\star | \leq C_f \left(\bE | \bar{\theta}_t - \theta^\star |^2 \right)^{1/2}.
\end{align*}
\end{proof}

\subsection{Sketch of proof of Proposition~\ref{prop:numerical_error}}\label{sec:discretisation}

 In this section we give a proof of Proposition \ref{prop:numerical_error}.  We recall that    $\zz_k^{\delta}:=(\theta_k^{\delta}, X_k^{\delta})$ is as in Algorithm \ref{alg:ais}, i.e. $\zz_k^{\delta, \epsilon}$ is the Euler-Maruyama discretization of \eqref{eq:theta_update}-\eqref{latent_update}. Here for clarity we denote $\zz_k^{\delta}=(\theta_k^{\delta}, X_k^{\delta})$ by $\zz_k^{\delta, \epsilon}=(\theta_k^{\delta, \epsilon}, X_k^{\delta, \epsilon})$, i.e. we keep track of the dependence on $\epsilon$ on the notation. Similarly, we denote by $\zz^{\epsilon}_t= (\theta_t^{\epsilon}, X_t^{\epsilon})$ the solution of the slow-fast system  \eqref{eq:theta_update}-\eqref{latent_update}.
 
 \begin{proof}
     First of all let us observe that Assumption 2.1 of \cite[Theorem 7.3]{mattingly2002ergodicity} is easily satisfied by ellipticity of the SDE \eqref{eq:theta_update}-\eqref{latent_update} and  \cite[Assumption 2.4]{mattingly2002ergodicity} is satisfied by choosing $U$ as a Lyapunov function (and indeed it is easy to see that $U$ does satisfy the Lyapunov inequality \cite[(2.2)]{mattingly2002ergodicity} if \eqref{ass:euler-estimate} holds). Hence the SDE \eqref{eq:theta_update}-\eqref{latent_update} admits a unique invariant measure, $\pi^{\epsilon}$, and it is  geometrically ergodic as well. More explicitly, using \cite[Theorem 2.5 and calculations on page 193]{mattingly2002ergodicity} one obtains  
$$
\left\vert \mathbb E g(\zz\ep_t) - \pi^{\epsilon}(g)\right \vert \leq C e^{-\lambda t} (1+ U(z_0)). 
$$
In order to prove \eqref{bound-numerical-error} we write
\begin{align}
    \left\vert \mathbb E g(\zz\de_k) - \mathbb E g(\zz_t\ep)\right \vert &\leq \left\vert \mathbb E g(\zz\de_k) - \pi^{\epsilon}(g)\right \vert \label{eulerbound1}\\
    &+ \left\vert \mathbb E g(\zz\ep_t) - \pi^{\epsilon}(g)\right \vert \, . \label{eulerbound2}
\end{align}
  A bound on \eqref{eulerbound2} has already been achieved. Now, for the bound on \eqref{eulerbound1}, we use again \cite[Theorem 7.3]{mattingly2002ergodicity}. \cite[Condition 7.1 and (7.4)]{mattingly2002ergodicity} are satisfied (with $s=1$), by using the calculations on page 193 of that paper. Moreover, by ellipticity, it is standard to show that the minorization condition\cite[(6.6)]{mattingly2002ergodicity} holds for our dynamics (see e.g. \cite[Theorem 6.2]{mattingly2002ergodicity}). Finally, from (7.5) and the bound before (7.4) one concludes the following
\begin{align*}
    \left \vert \mathbb E g(\zz_k\de) - \pi^{\epsilon}(g)\right \vert \leq G U(z_0) e^{-\tilde{\lambda} k \delta} + \tilde{G} \delta^{\xi} \pi^{\epsilon}(U) \,.
\end{align*}
This concludes the proof.
 \end{proof}


\section{Ascent property of the EM algorithm}\label{app:EM}
One can show that the EM algorithm maximises the likelihood iteratively, by noting that given a parameter estimate $\theta_k$, we have
\begin{align*}
\log p_\theta(y) &= \log \int p_{\theta}(x, y) \md x\\
&= \log \int \frac{p_{\theta_k}(x, y)}{p_{\theta_k}(x,y)} p_\theta(x, y) \md x \\
&= \log p_{\theta_k}(y) + \log \int {p_{\theta_k}(x|y)}\frac{p_\theta(x,y)}{p_{\theta_k}(x, y)} \md x \\
&\geq \log p_{\theta_k}(y) + \underbrace{\int p_{\theta_k}(x|y) \log {p_\theta(x,y)} \md x -  \int p_{\theta_k}(x|y) \log {p_{\theta_k}(x,y)} \md x}_{\Delta(\theta, \theta_k)}
\end{align*}
where the third equality in the above comes from writing $p_{\theta_k}(x,y)=p_{\theta_k}(x\vert y)  p_{\theta_k}(y)$, and 
\begin{align*}
\Delta(\theta, \theta_k) := Q(\theta, \theta_k) - Q(\theta_k, \theta_k).
\end{align*}
Now let $\theta_{k+1} \in \argmax_\theta Q(\theta, \theta_k)$. Then
\begin{align*}
&\log p_{\theta_{k+1}}(y) \geq \log p_{\theta_k}(y) + \underbrace{Q(\theta_{k+1}, \theta_k) - Q(\theta_k, \theta_k)}_{\geq 0}.
\end{align*}
Therefore, ascent in the marginal likelihood is achieved by this method.

\end{document}